\newcounter{countdownDay}
\newcounter{pagelimit}
\newcommand{\toomanypages}{A\raisebox{-3pt}{a\raisebox{-2pt}{a\raisebox{-1pt}{a}}}rgh~!}
\def\putsmileypage{\ifnum \thepage>\thepagelimit \toomanypages\fi}
\let\emptyset\varnothing
\newcommand{\langeq}{\ensuremath{\approx_L}}
\newcommand{\cC}{\ensuremath{\mathcal{C}}}
\newcommand{\cR}{\ensuremath{\mathcal{R}}}
\newcommand{\N}{\ensuremath{\mathbb{N}}}
\newcommand{\R}{\ensuremath{\mathbb{R}}}  
\newcommand{\tuple}[1]{\ensuremath{\left\langle#1\right\rangle}}
\newcommand{\ecta}{\ensuremath{\sf ECA}\xspace}
\newcommand{\regeq}[1]{\ensuremath{\approx_{#1}}}
\newcommand{\regeqd}[1]{\ensuremath{\approx^\angle_{#1}}}
\newcommand{\constr}[1]{\ensuremath{{\sf Constr}\left(#1\right)}}
\newcommand{\valuations}[1]{\ensuremath{{\cal V}\left(#1\right)}}
\newcommand{\denote}[1]{\ensuremath{\left[\!\left[#1\right]\!\right]}}
\newcommand{\true}{\ensuremath{{\sf true}}}
\newcommand{\cA}{\ensuremath{{\mathcal{A}}}}
\newcommand{\abs}[1]{\ensuremath{\left\lvert #1\right\rvert}}
  \newcommand{\posreal}{\R^{\geq 0}}
  \newcommand{\TF}[1]{\ensuremath{{\sf T}#1^*}}
\newcommand{\Reg}[2]{\ensuremath{{\sf Reg}\left(#1,#2\right)}}
\newcommand{\Regd}[2]{\ensuremath{{\sf Reg}^\angle\left(#1,#2\right)}}
\newcommand{\clocks}[1]{\ensuremath{\mathbb{C}_{#1}}}
\newcommand{\hclocks}[1]{\ensuremath{\mathbb{H}_{#1}}}
\newcommand{\pclocks}[1]{\ensuremath{\mathbb{P}_{#1}}}
\newcommand{\hclock}[1]{\ensuremath{\overleftarrow{x_{#1}}}}
\newcommand{\pclock}[1]{\ensuremath{\overrightarrow{x_{#1}}}}
\newcommand{\cmax}{\ensuremath{\mathit{cmax}}}
\newcommand{\regionautomaton}[3]{\ensuremath{{\sf RegAut}_{#1}^{#2}\left(#3\right)}}
\newcommand{\Arautomaton}[1]{\regionautomaton{\forall}{}{#1}}
\newcommand{\Erautomaton}[1]{\regionautomaton{\exists}{}{#1}}
\newcommand{\ADrautomaton}[1]{\regionautomaton{\forall}{\angle}{#1}}
\newcommand{\EDrautomaton}[1]{\regionautomaton{\exists}{\angle}{#1}}
\newcommand{\utautomaton}[1]{\ensuremath{A^{\sf ut}}}
\newcommand{\release}{\ensuremath{{\sf rel}}}
\newcommand{\untime}[1]{\ensuremath{{\sf Untime}(#1)}}
\newcommand{\TS}[1]{\ensuremath{{\sf TS}_{#1}}}
\newcommand{\roundEC}[1]{\ensuremath{\langle #1\rangle}}
\newcommand{\wplus}{\ensuremath{\operatorname{+_w}}}
\newcommand{\future}[1]{\ensuremath{\overrightarrow{#1}}}
\newcommand{\past}[1]{\ensuremath{\overleftarrow{#1}}}
\newcommand{\postop}{\ensuremath{\mathsf{Post}}\xspace}
\newcommand{\preop}{\ensuremath{\mathsf{Pre}}\xspace}
\newcommand{\post}[1]{\ensuremath{\postop\left(#1\right)}}
\newcommand{\poste}[2]{\ensuremath{\postop_{#1}\left(#2\right)}}
\newcommand{\poststar}[1]{\ensuremath{\postop^*\left(#1\right)}}
\newcommand{\pre}[1]{\ensuremath{\preop\left(#1\right)}}
\newcommand{\pree}[2]{\ensuremath{\preop_{#1}\left(#2\right)}}
\newcommand{\prestar}[1]{\ensuremath{\preop^*\left(#1\right)}}
\newcommand{\posti}[2]{\ensuremath{\postop^{#2}\left(#1\right)}}
\newcommand{\prei}[2]{\ensuremath{\preop^{#2}\left(#1\right)}}
\newcommand{\ezone}{event-zone}
\newcommand{\ezones}{event-zones}
\newcommand{\closure}[1]{\ensuremath{{\rm Closure}_#1}}
\newcommand{\appro}[1]{\ensuremath{{\rm Approx}_#1}}
\newcommand{\plmin}[1]{\ensuremath{#1^{\pm}}}
\newcommand{\forwex}{{\sf ForwExact}\xspace}
\newcommand{\backex}{{\sf BackExact}\xspace}
\renewcommand{\L}{\ensuremath{\mathsf{L}}}
\newcommand{\wL}{\ensuremath{\mathsf{wL}}}
\newtheorem{theorem}{Theorem}
\newtheorem{lemma}{Lemma}
\newtheorem{corollary}{Corollary}
\newtheorem{proposition}{Proposition}
\newtheorem{definition}{Definition}
\newif{\iffalse}
\newif\ifappendix\appendixtrue
\title{Event Clock Automata: from Theory to Practice%
  \thanks{Work supported by the projects: $(i)$ QUASIMODO (FP7-
    ICT-STREP-214755), Quasimodo: ``Quantitative System Properties in
    Model-Driven-Design of Embedded'', {\tt
      http://www.quasimodo.aau.dk/}, $(ii)$ GASICS (ESF-EUROCORES
    LogiCCC), Gasics: ``Games for Analysis and Synthesis of
    Interactive Computational Systems'', {\tt
      http://www.ulb.ac.be/di/gasics/} and $(iii)$ Moves: ``Fundamental
    Issues in Modelling, Verification and Evolution of Software'',
    {\tt http://moves.ulb.ac.be}, a PAI program funded by the Federal
    Belgian Government.}  }
\author{G. Geeraerts${}^1$ \thanks{Partly supported by a `Cr\'edit aux
    chercheurs' from the Belgian FRS/F.N.R.S.}\and J.F. Raskin${}^1$ \and
  N. Sznajder${}^2$ \\ ${}^1$ Universit\'e Libre Bruxelles, D\'epartement
  d'Informatique\\ Brussels, Belgium\\ ${}^2$ Universit\'e Pierre et Marie
  Curie UMR CNRS 7606, LIP6\\ Paris, France.\\ {\tt
    \{gigeerae,jraskin\}@ulb.ac.be}\\ {\tt nathalie.sznajder@lip6.fr}}
\begin{document}
\maketitle

\begin{abstract}
  \emph{Event clock automata} (\ecta) are a model for \emph{timed
    languages} that has been introduced by Alur, Fix and Henzinger as
  an alternative to \emph{timed automata}, with better theoretical
  properties (for instance, \ecta are determinizable while timed
  automata are not).  In this paper, we \emph{revisit} and
  \emph{extend} the theory of \ecta. We first prove that \emph{no
    finite time abstract language equivalence} exists for \ecta,
  thereby disproving a claim in the original work on \ecta. This means
  in particular that regions \emph{do not form a time abstract
    bisimulation}. Nevertheless, we show that regions can still be
  used to build a finite automaton recognizing the \emph{untimed
    language of an \ecta.} Then, we extend the classical notions of
  \emph{zones} and \emph{DBMs} to let them handle event clocks instead
  of plain clocks (as in timed automata) by introducing \emph{event
    zones} and \emph{Event DBMs} (EDBMs). We discuss algorithms to
  handle event zones represented as EDBMs, as well as (semi-)
  algorithms based on EDBMs to decide language emptiness of \ecta.
\end{abstract}

\section{Introduction}
\emph{Timed automata} have been introduced by Alur and Dill in the
early nineties \cite{AD94} and are a successful and popular model to
reason about \emph{timed behaviors} of computer systems. Where finite
automata represent behaviors by finite sequences of actions, timed
automata define sets of \emph{timed words} (called \emph{timed
  languages}) that are finite sequences of actions, each paired with a
real time stamp. To this end, timed automata extend finite automata
with a finite set of real valued clocks, that can be tested and reset
with each action of the system. The theory of timed automata is now
well developed \cite{DBLP:conf/cav/Alur99}. The algorithms to analyse
timed automata have been implemented in several tools such as Kronos
\cite{DBLP:conf/cav/BozgaDMOTY98} or UppAal (which is increasingly
applied in industrial case studies)
\cite{DBLP:conf/qest/BehrmannDLHPYH06}.

Timed automata, however, suffer from certain weaknesses, at least from
the theoretical point of view. As a matter of fact, timed automata are
\emph{not determinizable} and \emph{cannot be complemented} in general
\cite{AD94}.  Intuitively, this stems from the fact that the reset of
the clocks cannot be made deterministic wrt the word being
read. Indeed, from a given location, there can be two transitions,
labeled by the same action $a$ but different reset sets.

This observation has prompted Alur, Fix and Henzinger to introduce the
class of \emph{event clock automata} (\ecta for short) \cite{297329},
as an alternative model for timed languages. Unlike timed automata,
\ecta force the clock resets to be strongly linked to the occurrences
of actions. More precisely, for each action $a$ of the system, there
are two clocks $\hclock{a}$ and $\pclock{a}$ in an \ecta: $\hclock{a}$
is the \emph{history clock} of $a$ and \emph{always records the time
  elapsed since the last occurrence of $a$}. Symmetrically,
$\pclock{a}$ is the \emph{prophecy clock} for $a$, and \emph{always
  predicts the time distance up to the next occurrence of $a$}. As a
consequence, while history clocks see their values \emph{increase}
with time elapsing (like clocks in timed automata do), the values of
prophecy clocks \emph{decrease over time}. However, this scheme
ensures that the value of any clock is uniquely determined at any
point in the timed word being read, no matter what path is being
followed in the \ecta. A nice consequence of this definition is that
\ecta are \emph{determinizable} \cite{297329}. While the theory of
\ecta has witnessed some developments
\cite{339240,Dima:1999:KTE:647899.740957,Tang:2009:EVP:1506688.1506740,GGRS,DBLP:conf/formats/DSouzaT04}
since the seminal paper, no tool is available that exploits the full
power of event clocks (the only tool we are aware of is \textsc{Tempo}
\cite{Sor01} and it is restricted to \emph{event-recording automata},
i.e. \ecta with history clocks only).

In this paper, we revisit and extend the theory of \ecta, with the
hope to make it more practical and amenable to implementation. A
widespread belief \cite{297329} about \ecta and their analysis is that
\ecta are similar enough to timed automata that the classical
techniques (such as regions, zones or DBMs) developed for them can
readily be applied to \ecta. The present research, however, highlights
\emph{fundamental discrepancies} between timed automata and \ecta:
\begin{enumerate}
\item \label{item:2}First, we show that \emph{there is no finite time
    abstract language equivalence} on the valuations of \emph{event
    clocks}, whereas the region equivalence \cite{AD94} \emph{is} a
  finite time abstract language equivalence for timed automata. This
  implies, in particular, that \emph{regions do not form a finite
    time-abstract bisimulation for \ecta}, thereby contradicting a
  claim found in the original paper on \ecta \cite{297329}.
\item With timed automata, checking language emptiness can be done by
  building the so-called region automaton \cite{AD94} which recognizes
  $\untime{L(A)}$, the untimed version of $A$'s timed language. A
  consequence of the surprising result of point~\ref{item:2} is that,
  for some \ecta $A$, the \emph{region automaton recognizes a strict
    subset of $\untime{L(A)}$}. Thus, the region automaton (as defined
  in \cite{AD94}) \emph{is not a sound construction for checking
    language emptiness of \ecta}. We show however that a slight
  modification of the original definition (that we call the
  \emph{existential region automaton}) allows to recover
  $\untime{L(A)}$. Unlike the timed automata case, our proof cannot
  rely on bisimulation arguments, and requires original techniques.
\item Efficient algorithms to analyze timed automata are best
  implemented using \emph{zones} \cite{DBLP:conf/cav/Alur99}, that are
  in turn represented by \emph{DBMs} \cite{Dill89}. Unfortunately,
  zones and DBMs cannot be directly applied to \ecta. Indeed, a zone
  is, roughly speaking, a conjunction of constraints of the form
  $x-y\prec c$, where $x$, $y$ are clocks, $\prec$ is either $<$ or
  $\leq$ and $c$ is an integer. This makes sense in the case of timed
  automata, since the difference of two clock values is an invariant
  with time elapsing. This is not the case when we consider event
  clocks, as \emph{prophecy and history clocks evolve in opposite
    directions with time elapsing}. Thus, we introduce the notions of
  \ezone s and Event DBMs that can handle constraints of the form
  $x+y\prec c$, when $x$ and $y$ are of different types.
\item In the case of timed automata two basic, zone-based algorithms
  for solving language emptiness have been studied: the \emph{forward analysis
    algorithm} that iteratively computes all the states reachable from
  the initial state, and the \emph{backward analysis algorithm} that computes
  all the states that can reach an accepting state. While the former
  might not terminate in general, the latter is guaranteed to terminate
  \cite{DBLP:conf/cav/Alur99}. We show that this is not the case
  anymore with \ecta: \emph{both algorithms might not terminate} again
  because of event clocks evolving in opposite directions.
\end{enumerate}
These observations reflect the structure of the paper. We close it by
discussing the possibility to define \emph{widening operators},
adapted from the \emph{closure by region}, and the
\emph{$k$-approximation} that have been defined for timed automata
\cite{bouyer-fmsd-2004}. The hardest part of this future work will be
to obtain a proof of correctness for these operators, since, here
again, we will not be able to rely on bisimulation arguments.

\section{Preliminaries}
\paragraph{Words and timed words} An alphabet $\Sigma$ is a finite set
of symbols. A (finite) \emph{word} is a finite sequence
$w=w_0w_1\cdots w_n$ of elements of $\Sigma$.  We denote the length of
$w$ by $\abs{w}$.  We denote by $\Sigma^*$ the set of words over
$\Sigma$. A \emph{timed word} over $\Sigma$ is a pair $\theta =
(\tau,w)$ such that $w$ is a word over $\Sigma$ and
$\tau=\tau_0\tau_1\cdots \tau_{\abs{w}-1}$ is a word over $\posreal$
with $\tau_i\leq \tau_{i+1}$ for all $0\leq i < \abs{w}-1$.  We denote
by $\TF{\Sigma}$ the set of timed words over $\Sigma$. A (timed)
\emph{language} is a set of (timed) words. For a timed word
$\theta=(\tau,w)$, we let $\untime{\theta}=w$. For a timed language
$L$, we let $\untime{L}=\{\untime{\theta}\mid\theta\in L\}$.

\paragraph{Event clocks} Given an alphabet $\Sigma$, we define the set
of associated \emph{event clocks}
$\clocks{\Sigma}=\hclocks{\Sigma}\cup\pclocks{\Sigma}$, where
$\hclocks{\Sigma}=\{\hclock{\sigma}\mid \sigma\in\Sigma\}$ is the set
of \emph{history clocks}, and
$\pclocks{\Sigma}=\{\pclock{\sigma}\mid\sigma\in\Sigma\}$ is the set
of \emph{prophecy clocks}. A \emph{valuation} of a set of clocks is a
function $v:C\rightarrow \posreal\cup\{\bot\}$, where $\bot$ means
that the clock value is undefined. We denote by $\valuations{C}$ the
set of all valuations of the clocks in $C$. For a valuation $v\in
\mathcal{V}(C)$, for all $x\in \hclocks{\Sigma}$, we let
$\roundEC{v_1(x)} = \lceil v(x) \rceil - v(x)$ and for all $x\in
\pclocks{\Sigma}$, we let $\roundEC{v(x)} = v(x) - \lfloor v(x)
\rfloor$, where $\lfloor v(x) \rfloor$ and $\lceil v(x) \rceil$ denote
respectively the largest previous and smallest following integer.  We
also denote by $\plmin{v}$ the valuation s.t. $\plmin{v}(x)=v(x)$ for
all $x\in\hclocks{\Sigma}$, and $\plmin{v}(x)=-v(x)$ for all
$x\in\pclocks{\Sigma}$.

For all valuation $v\in\valuations{C}$ and all $d\in\posreal$ such
that $v(x)\geq d$ for all $x\in\pclocks{\Sigma}\cap C$, we define the
valuation $v+d$ obtained from $v$ by letting $d$ time units elapse:
for all $x\in\hclocks{\Sigma}\cap C$, $(v+d)(x)=v(x)+d$ and for all $x\in
\pclocks{\Sigma}\cap C$, $(v+d)(x)=v(x)-d$, with the convention that
$\bot+d=\bot-d=\bot$. A valuation is \emph{initial} iff $v(x)=\bot$
for all $x\in\hclocks{\Sigma}$, and \emph{final} iff $v(x)=\bot$ for
all $x\in\pclocks{\Sigma}$. We note $v[x:=c]$ the valuation that
matches $v$ on all its clocks except for $v(x)$ that equals $c$.

An \emph{atomic clock constraint} over $C\subseteq \clocks{\Sigma}$ is
either \true{} or of the form $x\sim c$, where $x\in C$,
$c\in\mathbb{N}$ and ${\sim}\in\{<,>,=\}$. A \emph{clock constraint}
over $C$ is a Boolean combination of atomic clock constraints. We
denote $\constr{C}$ the set of all possible clock constraints over
$C$. A valuation $v\in\valuations{C}$ satisfies a clock constraint
$\psi\in\constr{C}$, denoted $v\models\psi$ according to the following
rules: $v\models\true$, $v\models x\sim c$ iff $v(x)\sim c$,
$v\models\neg\psi$ iff $v\not\models\psi$, and
$v\models\psi_1\wedge\psi_2$ iff $v\models\psi_1$ and
$v\models\psi_2$.

\paragraph{Event-clock automata} An \emph{event-clock automaton} $A=
\tuple{Q,q_i,\Sigma,\delta,\alpha}$ (\ecta for short) is a tuple,
where $Q$ is a finite set of locations, $q_i\in Q$ is the initial
location, $\Sigma$ is an alphabet, $\delta\subseteq
Q\times\Sigma\times\constr{\clocks{\Sigma}}\times Q$ is a finite set
of edges, and $\alpha\subseteq Q$ is the set of accepting
locations. We additionally require that, for each $q\in Q$,
$\sigma\in\Sigma$, $\delta$ is defined for a finite number of
$\psi\in\constr{\clocks{\Sigma}}$.  An \emph{extended state} (or
simply state) of an \ecta $A=\tuple{Q,q_i,\Sigma,\delta,\alpha}$ is a
pair $(q,v)$ where $q\in Q$ is a location, and
$v\in\valuations{\clocks{\Sigma}}$ is a valuation.

\paragraph{Runs and accepted language}
The semantics of an \ecta $A=\tuple{Q,q_i,\Sigma,\delta,\alpha}$ is
best described by an infinite transition system $\TS{A}=\tuple{Q^A,
  Q_i^A, \rightarrow, \alpha^A}$, where
$Q^A=Q\times\valuations{\clocks{\Sigma}}$ is the set of extended
states of $A$, $Q_i^A=\{(q_i,v)\mid v \textrm{ is initial}\}$,
$\alpha^A=\{(q,v)\mid q\in \alpha\textrm{ and } v \textrm{ is
  final}\}$. The transition relation ${\rightarrow}\subseteq
Q^A\times\posreal\times Q^A\cup Q^A\times\Sigma\times Q^A$ is
s.t. $(i)$ $\big((q,v),t, (q,v')\big){\in}\rightarrow$ iff $v'=v+t$
(we denote this by $(q,v)\xrightarrow{t}(q,v')$), and $(ii)$
$\big((q,v),\sigma,(q',v')\bigr)\in\rightarrow$ iff there is
$(q,\sigma,\psi,q')\in \delta$ and
$\overline{v}\in\valuations{\clocks{\Sigma}}$
s.t. $\overline{v}[\pclock{\sigma}:=0]=v$,
$\overline{v}[\hclock{\sigma}:=0]=v'$ and $\overline{v}\models\psi$
(we denote this $(q,v)\xrightarrow{\sigma}(q',v')$).  We note
$(q,v)\xrightarrow{t,\sigma}(q',v')$ whenever there is $(q'',v'')$
s.t. $(q,v)\xrightarrow{t} (q'',v'')\xrightarrow{\sigma}
(q',v'')$. Intuitively, this means that an history clock
$\hclock{\sigma}$ always records the time elapsed since the last
occurrence of the corresponding $\sigma$ event, and that a prophecy
clock $\pclock{\sigma}$ always predicts the delay up to the next
occurrence of $\sigma$. Thus, when firing a $\sigma$-labeled
transition, the guard must be tested against $\overline{v}$ (as
defined above) because it correctly predicts the next occurrence of
$\sigma$ and correctly records its last occurrence (unlike $v$ and
$v'$, as $v(\pclock{\sigma})=0$ and $v'(\hclock{\sigma})=0$)

A sequence
$(q_0,v_0)(t_0,w_0)(q_1,v_1)(t_1,w_1)(q_2,v_2)\cdots(q_n,v_n)$ is a
$(q,v)$-\emph{run} of $A$ on the timed word $\theta=(\tau,w)$ iff:
$(q_0,v_0)=(q,v)$, $t_0=\tau_0$, for any $1\leq i\leq n-1$:
$t_i=\tau_i-\tau_{i-1}$, and for any $0\leq i\leq n-1$:
$(q_i,v_i)\xrightarrow{t_i,w_i}(q_{i+1},v_{i+1})$. A $(q,v)$-run is
\emph{initialized} iff $(q,v)\in Q_i^A$ (in this case, we simply call
it a run). A $(q,v)$-run on $\theta$, ending in $(q_n,v_n)$ is
\emph{accepting} iff $(q_n,v_n)\in\alpha^A$. In this case, we say that
the run accepts $\theta$. For an \ecta $A$ and an extended state
$(q,v)$ of $A$, we denote by $L(A,(q,v))$ the set of timed words
accepted by a $(q,v)$-run of $A$, and by $L(A)$ the set of timed words
accepted by an initialized run of $A$.

\section{Equivalence relations for event-clocks\label{sec:equiv-relat-event}}
A classical technique to analyze timed transition systems is to define
\emph{time abstract equivalence relations} on the set of states, and
to reason on the \emph{quotient} transition system. In the case of
\emph{timed automata}, a fundamental concept is the \emph{region
  equivalence} \cite{AD94}, which is a \emph{finite time-abstract}
bisimulation, and allows to decide properties of timed automata such
as reachability. Contrary to a widespread belief \cite{297329}, we
show that the class of \ecta does not benefit of these properties, as
\textbf{\ecta admit no finite time-abstract language equivalence}.

\paragraph{Time-abstract equivalence relations}
Let $\cC$ be a class of \ecta, all sharing the same alphabet
$\Sigma$. We recall three equivalence notions on event clock
valuations:
\begin{itemize}
\item
  ${\lesssim}\subseteq\valuations{\clocks{\Sigma}}\times\valuations{\clocks{\Sigma}}$
  is a \emph{time abstract simulation relation} for the class $\cC$
  iff, for all $\cA\in\cC$, for all location $q$ of $\cA$, for all
  $(v_1, v_2)\in {\lesssim}$, for all $t_1\in\posreal$, for all
  $a\in\Sigma$: $(q,v_1)\xrightarrow{t_1,a}(q',v_1')$ implies that
  there exists $t_2\in\posreal$
  s.t. $(q,v_2)\xrightarrow{t_2,a}(q',v_2')$ and $v_1'\lesssim
  v_2'$. In this case, we say that \emph{$v_2$ simulates
    $v_1$}. Finally,
  ${\simeq}\subseteq\valuations{\clocks{\Sigma}}\times\valuations{\clocks{\Sigma}}$
  is a \emph{time abstract simulation equivalence} iff there exists a
  time abstract simulation relation $\lesssim$
  s.t. ${\simeq}=\{(v_1,v_2)\mid v_1\lesssim v_2\textrm{ and }
  v_2\lesssim v_1\}$

\item $\sim$ is a \emph{time abstract bisimulation equivalence} for
  the class $\cC$ iff it is a \emph{symmetric} time abstract
  simulation for the class $\cC$.
\item
  ${\langeq}\subseteq\valuations{\clocks{\Sigma}}\times\valuations{\clocks{\Sigma}}$
  is a \emph{time abstract language equivalence} for the class $\cC$
  iff for all $\cA\in\cC$, for all location $q$ of $\cA$, for all
  $(v_1, v_2)\in{\langeq}$: $\untime{L(q, v_1)}=\untime{L(q, v_2)}$
\end{itemize}
We say that an equivalence relation is \emph{finite} iff it is of
finite index. Clearly, any time abstract bisimulation is a time
abstract simulation equivalence, and any time abstract simulation
equivalence is a time abstract language equivalence. We prove the
absence of \emph{finite} time abstract language equivalence for \ecta,
thanks to $A_{\sf inf}$ depicted in
Fig.~\ref{fig:no-finite-bisimulation}:
\begin{proposition}\label{prop:no-finite-ta-le}
  There is no finite time abstract language equivalence for \ecta.
\end{proposition}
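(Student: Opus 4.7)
The plan is to analyze the specific automaton $A_{\sf inf}$ of Fig.~\ref{fig:no-finite-bisimulation} and to exhibit, at a distinguished location $q$ of $A_{\sf inf}$, an infinite family of valuations $\{v_n\}_{n\in\N}$ that are pairwise separated by the untimed language of $A_{\sf inf}$, i.e.\ $\untime{L(A_{\sf inf},(q,v_n))}\neq \untime{L(A_{\sf inf},(q,v_m))}$ whenever $n\neq m$. Once this is established, any candidate time abstract language equivalence $\langeq$ must place each $v_n$ in a distinct class, since $v_n\langeq v_m$ with $n\neq m$ would directly contradict the definition of $\langeq$. Hence no $\langeq$ of finite index can exist.

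The family is obtained by varying a single prophecy clock. The core intuition is that, in an \ecta, the value of $v(\pclock{\sigma})$ is not an abstraction of the past but a \emph{commitment about the future}: it fixes the exact delay until the next $\sigma$-event. A larger value of $v(\pclock{\sigma})$ therefore opens up strictly more time for intermediate events to occur in between. Accordingly, I would fix a "baseline" valuation and set $v_n$ to agree with it on every clock except for $v_n(\pclock{\sigma}) = n$ for some appropriately chosen event $\sigma$ of $A_{\sf inf}$. The structure of $A_{\sf inf}$ (typically a self-loop whose guard uses a history clock to enforce a minimum separation between consecutive intermediate events) then ties the admissible count of intermediate events before $\sigma$ fires tightly to $n$.

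Concretely, for each $n$ I would exhibit a witness word $w_n$ of the form $b^{f(n)}\sigma\cdots$ such that $w_n\in\untime{L(A_{\sf inf},(q,v_n))}$ but $w_n\notin\untime{L(A_{\sf inf},(q,v_m))}$ for every $m<n$: the $f(n)$ intermediate events require $f(n)$ time units of slack, which is available exactly when the prophesied delay is at least $n$. The sequence of untimed languages is then strictly monotone in $n$, yielding infinitely many pairwise distinct untimed languages and, by the pigeonhole principle, the claimed non-existence of a finite $\langeq$.

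The main obstacle is to establish the distinctness claim without falling back on region- or bisimulation-based reasoning, since the present proposition is precisely what refutes such arguments for \ecta. One must therefore argue directly from the semantics $\TS{A_{\sf inf}}$: enumerate the possible runs starting from $(q,v_n)$, exploit the rigidity imposed by the prophecy commitment $v_n(\pclock{\sigma}) = n$ on the timing of the future $\sigma$-event, and conclude that the combinatorial shape of any accepting run depends monotonically on $n$. Some careful bookkeeping is also needed around clocks valued $\bot$ and around the fact that prophecy and history clocks evolve in opposite directions with time elapsing.
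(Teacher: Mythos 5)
Your proposal takes essentially the same route as the paper's proof: the paper also works with $A_{\sf inf}$, considers the infinite family of initial valuations $v^n$ with $v^n(\pclock{a})=n$ and $v^n(\pclock{b})=0$, and observes that each $(q_0,v^n)$ admits exactly one run, so that $\untime{L(A_{\sf inf},(q_0,v^n))}$ is a singleton ($b^n a$), whence the untimed languages are pairwise distinct and any time abstract language equivalence must have infinite index. The only cosmetic difference is that in the paper the loop guard uses prophecy clocks to force exact spacing of the $b$'s (so the languages are distinct singletons rather than a strictly monotone chain), which does not change the substance of the argument.
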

\begin{proof} 
  Let us assume that $\langeq$ is a time abstract language equivalence
  on the class of \ecta. We will show, thanks to $A_{\sf inf}$, that
  $\langeq$ has necessarily \emph{infinitely} many equivalence
  classes.
  
  For any $n\in \N$, let $v^n$ denote the \emph{initial} valuation of
  $\clocks{\{a,b\}}$ s.t.  $v^n(\pclock{a})=n$ and
  $v^n(\pclock{b})=0$, and let $\theta^n$ denote the timed word
  $(b,0)(b,1)(b,2)\cdots(b,n-1)(a,n)$. Observe that, for any $n\geq 0$,
  there is only one run of $A_{\sf inf}$ starting in $(q_0,v^n)$ and
  this run accepts $\theta^n$. Hence, for any $n\geq 0$: $\untime{\L(A,
    (q_0,v^n))}=\untime{\{\theta^n\}}=a^nb$. 
  
  Now, let $j,k$ be two natural values with $j\neq k$. Let $s^j=(q_0,
  v^j)$ and $s^k=(q_0,v^k)$. Clearly, $v^j\not\langeq v^k$ since
  $\untime{\L(A, s^j)}\neq \untime{\L(A_{\sf inf}, s^k)}$. Since this
  is true for infinitely many pairs $(v^j, v^k)$, $\langeq$ has
  necessarily an infinite number of equivalence classes. Thus, there
  is no \emph{finite} time abstract language equivalence on the class
  of \ecta. 
\end{proof}

\begin{corollary}\label{cor:no-finite-ecta}
  There is no \emph{finite} time abstract language equivalence, no
  \emph{finite} time abstract simulation equivalence and no
  \emph{finite} time abstract bisimulation for \ecta.
\end{corollary}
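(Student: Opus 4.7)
The plan is to derive the corollary directly from Proposition~\ref{prop:no-finite-ta-le} by exploiting the hierarchy of relations noted just before that proposition: every time abstract bisimulation is a time abstract simulation equivalence, and every time abstract simulation equivalence is a time abstract language equivalence.

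First, the statement about language equivalence is literally Proposition~\ref{prop:no-finite-ta-le}, so there is nothing new to prove for that case. Next, for simulation equivalence, I would argue by contradiction: suppose there exists a finite time abstract simulation equivalence $\simeq$ on $\valuations{\clocks{\Sigma}}$ with finitely many classes. By the implication chain recalled above, $\simeq$ is then also a time abstract language equivalence; since it has finite index, this would give us a finite time abstract language equivalence on \ecta, contradicting Proposition~\ref{prop:no-finite-ta-le}. The bisimulation case follows by exactly the same reasoning, since any finite time abstract bisimulation is in particular a finite time abstract simulation equivalence (and hence a finite time abstract language equivalence).

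The only potentially delicate point is to make sure the implications between the three notions are invoked for the \emph{same} class $\cC$ of \ecta used in Proposition~\ref{prop:no-finite-ta-le}: specifically, the witness \ecta $A_{\sf inf}$ must lie in the class under consideration. Since Proposition~\ref{prop:no-finite-ta-le} is established for the full class of \ecta over alphabet $\{a,b\}$, and since the implications bisimulation $\Rightarrow$ simulation equivalence $\Rightarrow$ language equivalence hold uniformly over any fixed class of \ecta, the contradiction is immediate. There is no real obstacle here; the corollary is a direct consequence of the proposition and the already observed hierarchy of time abstract relations.
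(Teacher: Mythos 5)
Your proposal is correct and matches the paper's intent exactly: the paper states this corollary without proof precisely because it follows immediately from Proposition~\ref{prop:no-finite-ta-le} together with the hierarchy (bisimulation $\Rightarrow$ simulation equivalence $\Rightarrow$ language equivalence) recalled just before that proposition, which is the very argument you spell out. Your extra care about fixing the same class $\cC$ containing $A_{\sf inf}$ is a sound (if implicit in the paper) clarification, not a deviation.
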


\begin{figure}[t]
  \begin{center}
    \begin{tabular}{lclc}
    \includegraphics[scale=1]{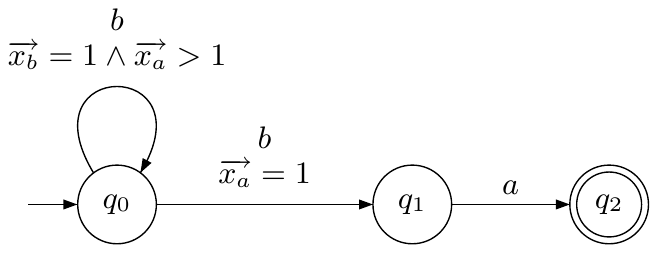}
  \end{tabular}
  \end{center}

  \caption{The automaton $A_{\sf
      inf}$}\label{fig:no-finite-bisimulation}
\end{figure}

\section{Regions and event clocks\label{sec:regions-event-clocks}}
For the class of timed automata, the \emph{region equivalence} has been
shown to be a \emph{finite time-abstract bisimulation}, which is used
to build the so-called \emph{region automaton}, a finite-state
automaton recognizing $\untime{L(A)}$ for all timed automata $A$
\cite{AD94}. Corollary~\ref{cor:no-finite-ecta} tells us that regions
are not a time-abstract bisimulation for \ecta (contrary to what was
claimed in \cite{297329}). Let us show that we can nevertheless rely
on the notion of region to build a finite automaton recognizing
$\untime{L(A)}$ for all \ecta $A$.

\paragraph{Regions} Let us fix a set of clocks
$C\subseteq\clocks{\Sigma}$ and a constant $\cmax\in\mathbb{N}$. We
first recall two region equivalences from the literature. The former,
denoted $\regeq{\cmax}$, is the classical Alur-Dill region equivalence
for timed automata \cite{AD94} while the latter (denoted
$\regeqd{\cmax}$) is adapted from Bouyer \cite{bouyer-fmsd-2004} and
refines the former:
\begin{itemize}
\item For any $v_1,v_2\in \valuations{C}$: $v_1\regeq{\cmax} v_2$ iff:
\begin{itemize}
\item[$({\sf C1})$] for all $x\in C$, $v_1(x)=\bot$ iff $v_2(x)=\bot$,
\item[$({\sf C2})$] for all $x\in C$: either $v_1(x) >\cmax$ and
  $v_2(x)>\cmax$, or $\lceil v_1(x)\rceil = \lceil v_2(x)\rceil$ and
  $\lfloor v_1(x)\rfloor = \lfloor v_2(x) \rfloor$,
\item[$({\sf C3})$] for all $x_1$, $x_2\in C$ s.t. $v_1(x_1)\leq\cmax$
  and $v_1(x_2)\leq\cmax$: $\roundEC{v_1(x_1)} \leq
  \roundEC{v_1(x_2)}$ if and only if $\roundEC{v_2(x_1)} \leq
  \roundEC{v_2(x_2)}$.
\end{itemize}
\item For all $v_1,v_2\in \valuations{C}$: $v_1\regeqd{\cmax} v_2$
  iff: $v_1\regeq{\cmax} v_2$ and:
\begin{itemize}
\item[$({\sf C4})$] For all $x_1,x_2\in C$ s.t. $v_1(x_1)> \cmax$ or
  $v_1(x_2)> \cmax$: either we have
  $\abs{\plmin{v_1}(x_1)-\plmin{v_1}(x_2)}>2\cdot\cmax$ and
  $\abs{\plmin{v_2}(x_1)-\plmin{v_2}(x_2)}>2\cdot\cmax$; or we have
  $\lfloor \plmin{v_1}(x_1)-\plmin{v_1}(x_2)\rfloor=\lfloor
  \plmin{v_2}(x_1)-\plmin{v_2}(x_2)\rfloor$ and $\lceil
  \plmin{v_1}(x_1)-\plmin{v_1}(x_2)\rceil=\lceil
  \plmin{v_2}(x_1)-\plmin{v_2}(x_2)\rceil$.
\end{itemize}
\end{itemize}

Equivalence classes of both $\regeq{\cmax}$ and $\regeqd{\cmax}$ are
called \emph{regions}. We denote by $\Reg{C}{\cmax}$ and
$\Regd{C}{\cmax}$ the set of regions of $\regeq{\cmax}$ and
$\regeqd{\cmax}$ respectively. Fig.~\ref{fig:diffreg} $(a)$, $(b)$ and
$(c)$ illustrate these two notions. Comparing $(a)$ and $(b)$ clearly
shows how $\regeqd{\cmax}$ refines $\regeq{\cmax}$ by introducing
diagonal constraints between clocks larger than $\cmax$. Moreover,
$(c)$ shows why we need to rely on $\plmin{v_1}$ and $\plmin{v_2}$ in
${\sf C4}$: in this case, $C$ contains an history and a prophecy clock
that evolve in opposite directions with time elapsing. Thus, their
\emph{sum} remains constant over time (hence the $2\cdot\cmax$ in~${\sf C4}$).

\begin{figure}
  \begin{center}
    \begin{tabular}{ll}
      (a)\\
      &\includegraphics[scale=.5]{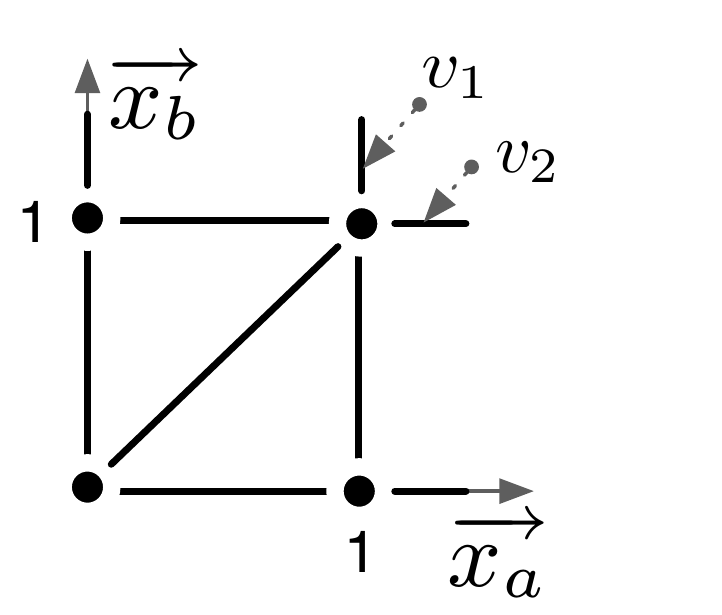}
    \end{tabular}

    \begin{tabular}{ll}
      (b)\\
      & \includegraphics[scale=.5]{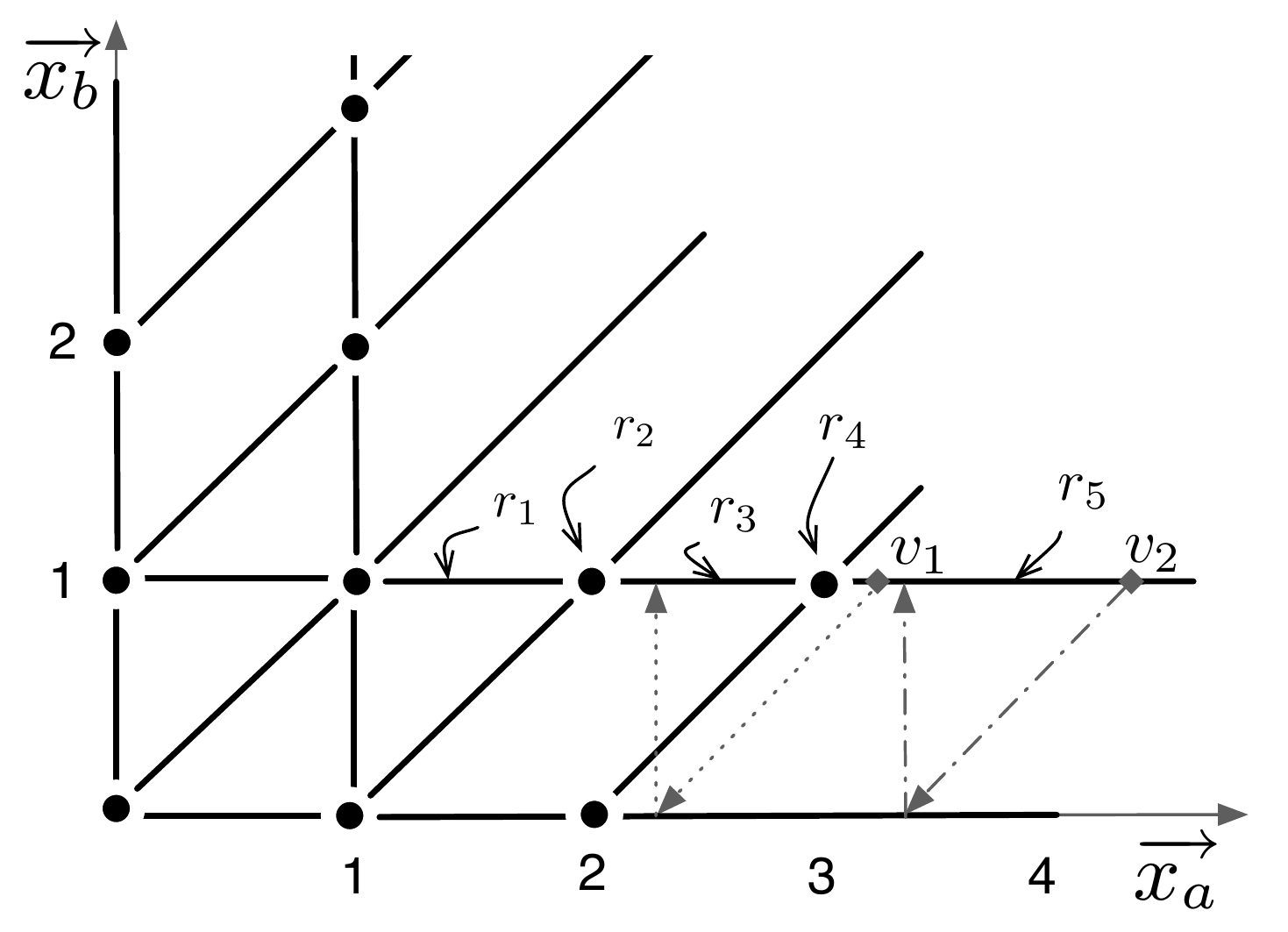}
    \end{tabular}

  \begin{tabular}{ll}
      (c)\\
      & \includegraphics[scale=.5]{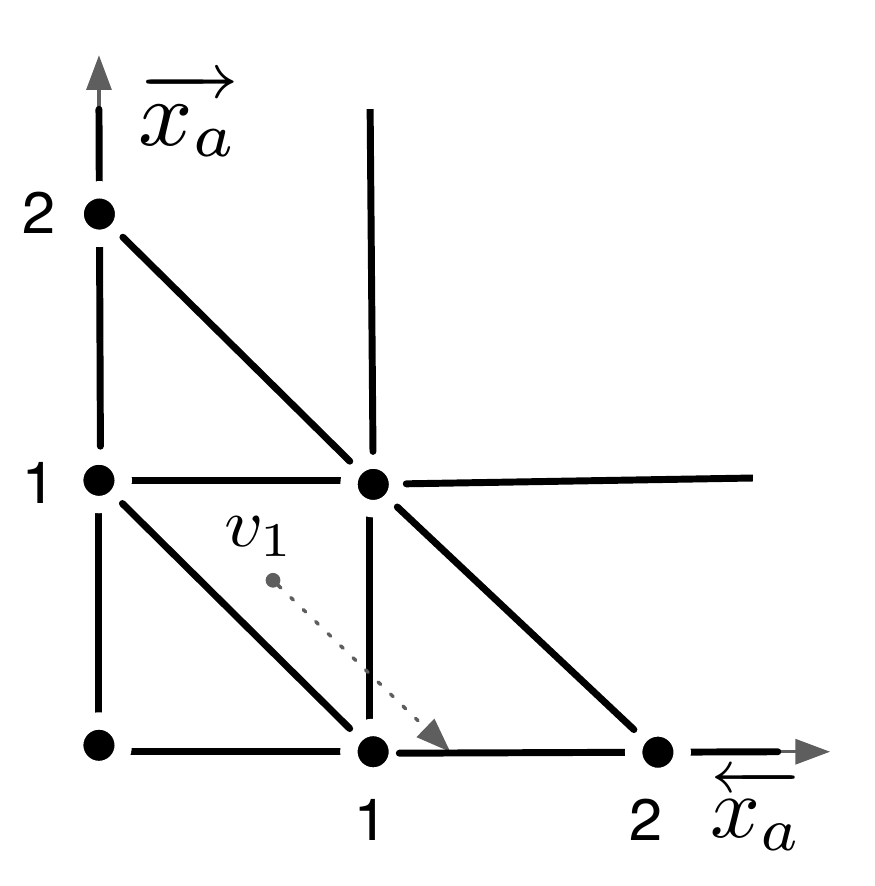}
    \end{tabular}
  \end{center}
  \caption{The sets of regions $(a)$ $\Reg{\pclocks{\{a,b\}}}{1}$,
    $(b)$ $\Regd{\pclocks{\{a,b\}}}{1}$ and $(c)$
    $\Regd{\clocks{\{a\}}}{1}$. Dotted arrows show the trajectories
    followed by the valuations with time elapsing. Curved arrows are
    used to refer to selected regions.}
 \label{fig:diffreg}
\end{figure}

Observe that, for any $\cmax$, and for any finite set of clocks $C$,
$\Reg{C}{\cmax}$ and $\Regd{C}{\cmax}$ are \emph{finite} sets. A
region $r$ on set of clocks $C$ is \emph{initial} (resp. \emph{final})
iff it contains only initial (final) valuations.

\paragraph{Regions are not a language equivalence} Since both notions
of regions defined above are finite,
Corollary~\ref{cor:no-finite-ecta} implies that they cannot form a
language equivalence for \ecta. Let us explain intuitively why it is
not the case. Consider $\Reg{\pclocks{\{a,b\}}}{1}$ and the two
valuations $v_1$ and $v_2$ in Fig.~\ref{fig:diffreg} (a). Clearly,
$v_1$ can reach the region where $\pclock{a}=1$ and $\pclock{b}>1$,
while $v_2$ cannot. Conversely, $v_2$ can reach $\pclock{a}>1$ and
$\pclock{b}=1$ but $v_2$ cannot. It is easy to build an \ecta with
$\cmax=1$ that distinguishes between those two cases and accepts
different words. Then, consider $\Regd{\pclocks{\{a,b\}}}{1}$ and the
valuations $v^3$ and $v^4$ (not shown in the figure)
s.t. $v^3(\pclock{b})=v^4(\pclock{b})=1$, $v^3(\pclock{a})=4$ and
$v^4(\pclock{a})=5$. It is easy to see that for $A_{\sf inf}$ in
Fig.~\ref{fig:no-finite-bisimulation}:
$\untime{L(A_{\inf},(q_0,v^3))}=\{bbba\}\neq\{bbbba\}=\untime{L(A_{\sf
    inf},(q_0,v^4))}$, although $v^3$ and $v^4$ belong to the same
region. Indeed, from $v^3$, the $(q_0,q_0)$ loop can be fired 3 times
before we reach $\pclock{a}=1$ and the $(q_0,q_1)$ edge can be
fired. However, the $(q_0,q_0)$ loop has to be fired 4 times from
$v^4$ before we reach $\pclock{a}=1$ and the $(q_0,q_1)$ edge can be
fired. Remark that these are essentially the same arguments as in the
proof of Proposition~\ref{prop:no-finite-ta-le}.  These two examples
illustrate the issue with \emph{prophecy clocks} and regions. Roughly
speaking, to keep the set of regions finite, valuations where the
clocks are \emph{too large} (for instance, $>\cmax$ in the case of
$\Reg{C}{\cmax}$) belong to the same region. This is not a problem for
history clocks as an history clock larger than $\cmax$ remains over
$\cmax$ with time elapsing. This is not the case for prophecy clocks
whose values \emph{decrease with time elapsing}: eventually, those
clocks reach a value $\leq\cmax$, but the region equivalence is too
coarse to allow to predict the region they reach.

\paragraph{Region automata} Let us now consider the consequence of
Corollary~\ref{cor:no-finite-ecta} on the notion of region automaton.
We first define two variants of the region automaton:
\begin{definition}\label{strregauto}
  Let $A= \tuple{Q,q_i,\Sigma,\delta,\alpha}$ and $\cR$ be a set of
  regions on $\valuations{\clocks{\Sigma}}$. Then, the
  \textbf{existential} (resp. \textbf{universal}) \emph{$\cR$-region
    automaton of $\cA$} is the finite transition system
  $RA(\exists,\cR,A)$ (resp. $RA(\forall, \cR, A)$) defined by 
  $\tuple{Q^R, Q_i^R, \Sigma, \delta^R, \alpha^R}$ s.t.:
  \begin{enumerate}
  \item $Q^R= Q\times \cR$
  \item $Q_i^R= \{(q_i,r)\mid r \textrm{ is an initial region}\}$
  \item $\delta^R\subseteq Q^R\times\Sigma\times Q^R$ is
    s.t. $\big((q_1,r_1), a, (q_2, r_2)\big)\in \delta$ iff
    \textbf{there exists a valuation} (resp. \textbf{for all
      valuations}) $v_1\in r_1$, there exists a time delay
    $t\in\posreal$ and a valuation $v_2\in r_2$
    s.t. $(q_1,v_1)\xrightarrow{t,a}(q_2,v_2)$.
  \item $\alpha^R=\{(q,r)\mid q\in \alpha\textrm{ and $r$ is a final
      region}\}$
  \end{enumerate}
\end{definition}
Let $R=\tuple{Q^R, Q_i^R, \Sigma, \delta^R, \alpha^R}$ be a region
automaton and $w$ be an (untimed) word over $\Sigma$. A \emph{run} of
$R$ on $w=w_0w_1\ldots w_n$ is a finite sequence
$(q_0,r_0)(q_1,r_1)\ldots$ $(q_{n+1},r_{n+1})$ of states of $R$ such that:
$(q_0,r_0)\in Q_i^R$ and such that: for all $0\leq i\leq n$:
$\big((q_i,r_i),w_{i}, (q_{i+1},r_{i+1})\big)\in\delta^R$. Such a
run is \emph{accepting} iff $(q_{n+1},r_{n+1})\in\alpha^R$ (in that case, we say
that $w$ is accepted by $R$). The language $L(R)$ of $R$ is the set of
all untimed words accepted by $R$.

Let $A$ be an \ecta with alphabet $\Sigma$ and maximal constant
$\cmax$. If we adapt and apply the notion of region automaton, as
defined for timed automata \cite{AD94}, to $A$ we obtain
$RA(\forall,\Reg{\clocks{\Sigma}}{\cmax},A)$. To alleviate notations,
we denote it by $\Arautomaton{A}$. In the rest of the paper, we also
consider three other variants: $(i)$~$\ADrautomaton{A}=$ $RA(\forall,
\Regd{\clocks{\Sigma}}{\cmax},A)$, $(ii)$~$\Erautomaton{A}=$
$RA(\exists, \Reg{\clocks{\Sigma}}{\cmax},A)$ and
$(iii)$~$\EDrautomaton{A}=RA(\exists,
\Regd{\clocks{\Sigma}}{\cmax},A)$. Observe that, for timed automata,
all these automata coincide, and thus accept the untimed language
(this can be proved by a bisimulation argument) \cite{AD94}. Let us
see how these results adapt (or not) to \ecta.

\paragraph{Recognized language of universal region automata} Let us
show that, in general \emph{universal} region automata \emph{do not
  recognize the untimed language of the \ecta.}
\begin{lemma}\label{lem:univ-do-not-recognise}
  There is an \ecta $A$ such that
  $L(\Arautomaton{A})\subsetneq\untime{L(A)}$ and such that
  $L(\ADrautomaton{A})\subsetneq\untime{L(A)}$.
\end{lemma}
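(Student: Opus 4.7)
The plan is to reuse the automaton $A_{\sf inf}$ of Figure~\ref{fig:no-finite-bisimulation} together with the valuations $v^n$ from the proof of Proposition~\ref{prop:no-finite-ta-le}, where $\untime{L(A_{\sf inf}, (q_0, v^n))} = \{b^n a\}$. The main tool is a \emph{monotonicity lemma} for universal region automata: if a word $w$ labels an accepting run of $RA(\forall, \cR, A)$ starting at $(q, r)$, then for every valuation $v \in r$ there is a $(q, v)$-run of $A$ accepting some timed word $\theta$ with $\untime{\theta} = w$. I would prove it by straightforward induction on $|w|$, reading off the ``for all valuations'' clause of Definition~\ref{strregauto}: each action step has to be feasible from every source valuation, though the delay and target valuation may vary per valuation. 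Taking $(q, r) = (q_0, r)$ to be an initial state then gives the inclusion that any word accepted by a run of $RA(\forall,\cR,A)$ starting at $(q_0,r)$ lies in $\bigcap_{v \in r} \untime{L(A, (q_0, v))}$.

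For the $\ADrautomaton$ half I would use $b^3 a$ as a witness, which is in $\untime{L(A_{\sf inf})}$ via $v^3$. A direct check of C1--C4 of $\regeqd{1}$ shows that $v^3$ and $v^4$ lie in the same initial region $r_3$ of $\Regd{\clocks{\{a,b\}}}{1}$: C1--C3 are obviously shared, and for C4 applied to the pair $(\pclock{a},\pclock{b})$ both valuations satisfy $|\plmin{v}(\pclock{a})-\plmin{v}(\pclock{b})| > 2\cdot\cmax = 2$, so the first disjunct of C4 holds for both. The monotonicity lemma then forces any word accepted from $(q_0, r_3)$ to belong to $\{b^3 a\} \cap \{b^4 a\} = \emptyset$. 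For every other initial region $r$ that contains some valuation accepting a timed version of $b^3 a$, a ``shift $\pclock{a}$ by one'' trick exhibits a second valuation $v' \in r$ whose untimed language is $\{b^4 a\}$, so again the intersection is empty. Hence $b^3 a \in \untime{L(A_{\sf inf})}\setminus L(\ADrautomaton{A_{\sf inf}})$.

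For $\Arautomaton{A_{\sf inf}}$, the coarser equivalence $\regeq{1}$ makes things easier: its initial region containing $v^2$ also contains every $v^n$ with $n\geq 2$ (all satisfy $\pclock{a}>\cmax$ and share the integer value $\pclock{b}=0$, and there is no C4 to split them apart), so the same intersection argument applied to the witness $b^2 a$ separates $\untime{L(A_{\sf inf})}$ from $L(\Arautomaton{A_{\sf inf}})$. The main obstacle is setting up the monotonicity lemma cleanly, because its inductive step must carry per-source-valuation choices of delay and target valuation through every action; once this is done the remaining work is routine region-membership bookkeeping together with the singleton-language identity delivered by Proposition~\ref{prop:no-finite-ta-le}.
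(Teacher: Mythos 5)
Your proposal is correct, and it takes a genuinely different route from the paper. The paper's proof works directly on the transition structure of the two universal region automata of $A_{\sf inf}$: it inspects the region(s) reachable after the first $b$ and shows that distinct valuations in such a region reach \emph{different} regions when the loop is fired again, so no single target region can serve all of them and the required $b$-edge simply does not exist in the universal construction; the automata therefore get stuck after a bounded number of $b$'s, while $\untime{L(A_{\sf inf})}=\{b^na\mid n\geq 1\}$ is infinite. You instead prove a general monotonicity lemma --- every word accepted by $RA(\forall,\cR,A)$ from $(q,r)$ lies in $\bigcap_{v\in r}\untime{L(A,(q,v))}$, which is indeed a routine induction on the ``for all valuations'' clause of Definition~\ref{strregauto} --- and then kill explicit witness words ($b^2a$ for $\regeq{1}$, $b^3a$ for $\regeqd{1}$) by exhibiting same-region valuations with distinct singleton languages. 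This is essentially the paper's informal discussion in the paragraph ``Regions are not a language equivalence'' promoted to a rigorous argument, and it buys two things: the lemma is automaton-generic rather than specific to $A_{\sf inf}$, and it yields the inclusion $L(RA(\forall,\cR,A))\subseteq\untime{L(A)}$ explicitly (this half of $\subsetneq$ is needed too, and the paper leaves it implicit). Your calibration of witnesses is also exactly right: $v^2\regeq{1}v^3$, whereas $v^2$ and $v^3$ are separated by $({\sf C4})$ because $\abs{\plmin{v^2}(\pclock{a})-\plmin{v^2}(\pclock{b})}=2=2\cdot\cmax$ forces the second disjunct, whose floors ($-2$ versus $-3$) disagree --- hence the diagonal case genuinely needs $v^3,v^4$. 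What the paper's approach buys in exchange is that its inspection of the guards of $A_{\sf inf}$ happens once, at the level of edges; your ``shift $\pclock{a}$ by one'' step, which you assert rather than prove, is the corresponding obligation on your side: you must check from the actual guards that every initial valuation accepting a timed version of $b^3a$ satisfies $\pclock{a}-\pclock{b}\in(2,3]$, hence has $\pclock{a}>\cmax$ and gap strictly above $2\cdot\cmax$, so that the shifted valuation stays in the same $\regeqd{1}$-region and has language $\{b^4a\}$ (and similarly, with no $({\sf C4})$ to worry about, for $b^2a$ under $\regeq{1}$). That check does go through for $A_{\sf inf}$, so once it is spelled out your proof is complete.
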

\begin{proof}
  Consider the automaton $A_{\sf inf}$ in
  Fig.~\ref{fig:no-finite-bisimulation}, with $\cmax=1$. Assume there
  is, in $\Arautomaton{A_{\sf inf}}$, and edge of the form
  $\big((q_0,r),b,(q_0,r')\big)$, where $r$ is initial. By the guard
  of the $(q_0,q_0)$ loop, $r'$ is a region s.t. for all $v\in r$:
  $v(\pclock{b})=1$ and $v(\pclock{a})>1$. To fire the $(q_0,q_0)$
  loop again, we need to let time elapse up to the point where
  $\pclock{b}=0$. Then consider two valuations $v$ and $v'$
  s.t. $v(\pclock{b})=v'(\pclock{b})=1$, $v(\pclock{a})=1.1$ and
  $v'(\pclock{a})=2.1$. Clearly, $\{v,v'\}\subseteq r'$. However,
  firing the $(q_0,q_0)$ loop from $(q_0,v)$ leads to $(q_0, v'')$,
  with $v''(\pclock{a})=0.1$, and firing the same $(q_0,q_0)$ loop
  from $(q_0,v')$ leads to $(q_0,v''')$ with
  $v'''(\pclock{a})=1.1$. Thus, $v''$ and $v'''$ do not belong to the
  same region. Since we are considering a \emph{universal} automaton,
  we conclude that there is no edge of the form
  $\big((q_0,r'),b,(q_0,r'')\big)$. Hence, $\Arautomaton{A_{\sf inf}}$
  cannot recognize an arbitrary number of $b$'s from any of its
  initial states, and thus, $L(\Arautomaton{A_{\sf
      inf}})\subsetneq\untime{L(A_{\sf inf})}$.

  For the second case, we consider Fig.~\ref{fig:diffreg} (b) that
  depicts the projection of the set of regions used to build
  $\ADrautomaton{A_{\sf inf}}$ on the clocks
  $\{\pclock{a},\pclock{b}\}$ (remark that we can restrict our
  reasoning to this projection, since the other clocks are never
  tested in $A_{\sf inf}$). Assume there is, in $\ADrautomaton{A_{\sf
      inf}}$, an edge of the form $\big((q_0,r),b,(q_0,r')\big)$ were
  $r$ is initial. This implies that $r'\in\{r_1,\ldots, r_5\}$ (we
  refer to the names in Fig.~\ref{fig:diffreg}), because of the guard
  of the $(q_0,q_0)$ loop. Since $\untime{L(A_{\sf inf})}=\{b^na\mid
  n\geq 1\}$, it must be possible to accept an arbitrary number of
  $b$'s from one of the $(q_0,r')$. Let us show that it is not the
  case.  From $r_3$ and $r_4$ we have edges
  $\big((q_0,r_3),b,(q_0,r_1)\big)$ and
  $\big(q_0,r_4),b,(q_0,r_2)\big)$. However, there is no valuation
  $v\in r_1\cup r_2$ s.t. $(v+t)(\pclock{b})=0$ and
  $(v+t)(\pclock{a})>1$ for some $t$. Thus, there is, in
  $\ADrautomaton{A_{\sf inf}}$, no edge of the form
  $\big((q_0,r),b,(q_0,r')\big)$ when $r\in r_1, r_2$. Finally, there
  is no edge of the form $\big((q_0, r_5),b,(q_0,r)\big)$ because
  \emph{some valuations} of $r_5$ (such as $v_1$) will reach $r_3$ and
  \emph{some others} (such as $v_2$) will stay in $r_5$ after the
  firing of the loop. Since we consider a \emph{universal} automaton,
  $(q_0, r_5)$ has no successor.
\end{proof}

\paragraph{Recognized language of existential region automata}
Fortunately, the definition of \emph{existential region automaton}
allows us to recover a finite transition system recognizing exactly
$\untime{L(A)}$, for all \ecta $A$.  Remark that our construction is
\emph{direct}, contrary to the original construction \cite{297329}
that consists in first translating the \ecta into a non-deterministic
timed automaton recognising the same timed language but with an
increased number of clocks compared to the original \ecta, and then
computing the region automaton of this timed automaton. Moreover, the
proof we are about to present cannot invoke the fact that regions form
a time-abstract bisimulation, as it is the case for timed automata,
and we thus need to rely on different proof techniques. Actually, we
will show that:
\begin{eqnarray*}
  \untime{L(A)}\subseteq 
  L(\EDrautomaton{A})\subseteq 
  L(\Erautomaton{A})\subseteq 
  \untime{L(A)}
\end{eqnarray*}
The two leftmost inequalities are easily established by the following
reasonings. Let $(q_0,v_0)(t_0,w_0)(q_1,v_1)$
$(t_1,w_1)\cdots(q_n,v_n)$ be an accepting run of $A$ on
$\theta=(\tau,w)$. Thus, $\theta\in L(A)$. For all $0\leq i\leq n$ let
$r_i$ be the (unique) region containing $v_i$. Then, by definition of
$\EDrautomaton{A}$, $(q_0,r_0)w_0(q_1,r_1)w_1\cdots(q_n,r_n)$ is an
accepting run of $\EDrautomaton{A}$ on $w=\untime{\theta}$. Hence
$\untime{L(A)}\subseteq L(\EDrautomaton{A})$. Second, since
$\regeqd{\cmax}$ refines $\regeq{\cmax}$, each accepting run
$(q_0,r_0)w_0(q_1,r_1)w_1\cdots(q_n,r_n)$ in $\EDrautomaton{A}$
corresponds to an accepting run
$(q_0,r_0')w_0(q_1,r_1')w_1\cdots(q_n,r_n')$ in $\Erautomaton{A}$,
where for any $0\leq i\leq n$, $r_i'$ is the (unique) region of
$\Reg{\clocks{\Sigma}}{\cmax}$ that contains $r_i$. Hence,
$L(\EDrautomaton{A})\subseteq L(\Erautomaton{A})$.

To establish $L(\Erautomaton{A})\subseteq \untime{L(A)}$ we need to
rely on the notion of \emph{weak time successor}. The set of
\emph{weak time successors} of $v$ by $t$ time units is:
{\small
\begin{eqnarray*}
  v\wplus t&=&
  \left\{
    \begin{array}{r|cc}
      &&\big(x\in\pclocks{\Sigma}\textrm{ and }v(x)>\cmax\big)\textrm{ implies } v'(x)>\cmax-t\\
      v'&\forall x:&\textrm{and}\\
      &&\big(x\notin\pclocks{\Sigma}\textrm{ or }v(x)\leq\cmax\textrm{ or } v(x)=\bot\big)\textrm{ implies }v'(x)=(v+t)(x)
    \end{array}
  \right\}
\end{eqnarray*}
} As can be seen, weak time successors introduce non-determinism on
prophecy clocks that are larger than $\cmax$. So, $v\wplus t$ is a
\emph{set} of valuations.  Let $q$ be a location of an \ecta. We write
$(q,v)\xrightarrow{t}_w(q,v')$ whenever $v'\in(v\wplus t)$. Then, a
sequence $(q_0,v_0)(t_0,w_0)$
$(q_1,v_1)(t_1,w_1)(q_2,v_2)\cdots(q_n,v_n)$ is an initialized
\emph{weak run}, on $\theta=(\tau,w)$, of an \ecta $A=\tuple{Q, q_i,
  \Sigma, \delta, \alpha}$ iff $q_0= q_i$, $v_0$ is initial,
$t_0=\tau_0$, for any $1\leq i\leq n-1$: $t_i=\tau_i-\tau_{i-1}$, and
for any $0\leq i\leq n-1$: there is $(q_i',v_i')$ s.t.
$(q_i,v_i)\xrightarrow{t_i}_w(q_i',v_i')\xrightarrow{w_i}(q_{i+1},v_{i+1})$.
A weak run is accepting iff $q_n\in\alpha$ and $v_n$ is final. The
weak language $\wL(A)$ of $A$ is the set of all timed words $\theta$
s.t. there is an accepting weak run on $\theta$. Clearly,
$L(A)\subseteq \wL(A)$ as every run is also a weak run. However, the
converse also holds, since the non-determinism appears only on clocks
larger than $\cmax$, which the automaton cannot distinguish:
\begin{proposition}\label{prop:wl-equals-l}
  For any \ecta $A$: $L(A)=\wL(A)$.
\end{proposition}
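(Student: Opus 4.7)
The inclusion $L(A) \subseteq \wL(A)$ is immediate: the ordinary time successor $v+t$ always lies in $v \wplus t$, since for a prophecy clock $x$ with $v(x) > \cmax$ one has $(v+t)(x) = v(x) - t > \cmax - t$, while on the remaining clocks the definition of $\wplus$ forces equality with $v+t$. Hence every run of $A$ is also a weak run.

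For the reverse inclusion, fix a weak accepting run $\rho = (q_0, u_0)(t_0, w_0)(q_1, u_1) \cdots (q_n, u_n)$ on $\theta = (\tau, w)$. The plan is to build a strong run that visits the same locations and uses the same edges as $\rho$, replacing each $u_i$ by the \emph{canonical} valuation $\tilde u_i$ determined by $\theta$: $\tilde u_i(\hclock{\sigma})$ is the actual time elapsed since the last occurrence of $\sigma$ in $w_0 \cdots w_{i-1}$ (or $\bot$ if none), and $\tilde u_i(\pclock{\sigma})$ is the actual time until the next occurrence of $\sigma$ in $w_i \cdots w_{n-1}$ (or $\bot$ if none). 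The initial and final conditions on $\tilde u_0$ and $\tilde u_n$ hold by construction, and every elapse is admissible because $\tilde u_i(\pclock{\sigma}) \geq t_i$ by definition. The only non-trivial task is to check, at each step $i$, that the guard $\psi_i$ of the edge fired by $\rho$ is still satisfied by the canonical event-valuation $\overline{\tilde u_i + t_i}$.

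To this end I establish the following global invariant, for every step $i$ and clock $x$: (i) if $x \in \hclocks{\Sigma}$, then $u_i(x) = \tilde u_i(x)$; (ii) $u_i(x) = \bot$ iff $\tilde u_i(x) = \bot$; (iii) if $x \in \pclocks{\Sigma}$ is real-valued and $\tilde u_i(x) \leq \cmax$, then $u_i(x) = \tilde u_i(x)$. Point (i) holds because $\wplus$ leaves history clocks untouched. Point (ii) holds because $\pclock{\sigma}$ toggles between real and $\bot$ only through a $\sigma$-event, and these fire in $\rho$ exactly at the positions dictated by $\theta$. Point (iii) is the main obstacle. The key observation is that whenever the weak value of $\pclock{\sigma}$ drops from above $\cmax$ to at most $\cmax$ during an elapse at some step $l$, the definition of $\wplus$ forces the new value to exceed $\cmax - t_l$; once it has entered $[0,\cmax]$ it must then decrease at unit speed and reach $0$ exactly at the next $\sigma$-event. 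Combining these two constraints yields a strict inequality of the form $\tau_{\text{next }\sigma} - \tau_{l-1} > \cmax$, which is incompatible with $\tilde u_i(\pclock{\sigma}) \leq \cmax$ as soon as $l \geq i$. Therefore $u_i(\pclock{\sigma})$ cannot exceed $\cmax$ under the hypothesis of (iii), and the forced unit-speed descent then pins it down to the canonical value.

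With the invariant in hand, guard preservation reduces to an atom-by-atom check. Each atomic constraint in $\psi_i$ has the form $x \sim c$ with $c \leq \cmax$. For $x = \hclock{w_i}$ we apply (i) at step $i$; for every other clock, comparing $\overline{u_i'}(x)$ with $\overline{\tilde u_i + t_i}(x)$ reduces to comparing $u_{i+1}(x)$ with $\tilde u_{i+1}(x)$, to which we apply (i)--(iii) at step $i+1$. In each case either the two valuations give $x$ the same value, or they both give a value in $(\cmax,\infty)$, or they both give $\bot$; the truth of $x \sim c$ is therefore identical in $\overline{u_i'}$ and $\overline{\tilde u_i + t_i}$. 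The strong run with valuations $(\tilde u_i)_i$ thus follows the same edges as $\rho$, satisfies every guard, and ends in the final valuation $\tilde u_n$, so $\theta \in L(A)$ and $\wL(A) \subseteq L(A)$.
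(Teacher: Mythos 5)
Your proof is correct and follows essentially the same route as the paper's: the paper likewise repairs the weak run by replacing each prophecy-clock value above $\cmax$ with the true time to the next occurrence of the corresponding letter (obtained there by back-propagating from the first position where that clock drops to $\leq\cmax$, which coincides with your canonical valuation), and likewise concludes because guards, whose constants are at most $\cmax$, cannot distinguish two values exceeding $\cmax$. One small remark: your atom-by-atom check also uses the converse of invariant (iii) --- that $\tilde u_{i+1}(x)>\cmax$ forces $u_{i+1}(x)>\cmax$ --- which is not literally part of the stated invariant but follows immediately from the forced unit-speed-descent observation you already established.
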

\begin{proof}
    Since, by definition, every run is a weak run, $L(A)\subseteq
  \wL(A)$. Let us show that $L(A)\supseteq \wL(A)$. Let
  $\theta=(\tau_0,w_0)\cdots(\tau_n,w_n)$ be a timed word in $\wL(A)$,
  and let
  $(q_0,v_0)\xrightarrow{t_0}_w(q_0,v_0')\xrightarrow{w_0}(q_1,v_1)
  \cdots(q_{n+1},v_{n+1})$ be the corresponding accepting weak run of
  $A$. For any $0\leq i\leq n$, we build $\overline{v}_i$ as
  follows. For any $x$ s.t. $x\in\pclocks{\Sigma}$ and
  $v'_i(x)>\cmax$, let $k>i$ be the least position
  s.t. $v'_k(x)\leq\cmax$. Remark that such a position always exists
  in an \emph{accepting} run (recall that if a letter is never to be
  seen again, its valuation must be set to $\bot$). Then, we let
  $\overline{v}_i(x)=v'_k(x)+\sum_{j=i+1}^kt_j$. Otherwise, we let
  $\overline{v}_i(x)=v_i'(x)$. Remark that $v_i'$ and $\overline{v}_i$
  differ only on prophecy clocks larger than $\cmax$, and that
  $v_i'(x)>cmax$ iff $\overline{v}_i(x)>\cmax$ for any $i$ and
  $x$. Moreover, the definition of the sequence of $\overline{v}_i$
  clearly respects the definition of time successor. We further define
  $\tilde{v}_i$ for all $i$ as follows: $\tilde{v}_i(x)=t_i +
  \overline{v}_i(x)$ for all $x\in\pclocks{\Sigma}$
  s.t. $v_i(x)>\cmax$ and $\tilde{v}_i(x) = v_i(x)$ otherwise. Hence,
  it can be checked that for all $0\leq i\leq n$,
  $(q_i,\tilde{v}_i)\xrightarrow{t_i}(q_{i},
  \overline{v}_{i})\xrightarrow{w_i}(q_{i+1},\tilde{v}_{i+1})$, and so
  that
  $(q_0,\tilde{v}_0)\xrightarrow{t_0}(q_0,\overline{v}_0)\xrightarrow{w_0}(q_1,\tilde{v}_1)
  \xrightarrow{t_1}(q_1,\overline{v}_1)\cdots(q_{n+1},\tilde{v}_{n+1})$. Moreover,
  $\tilde{v}_{n+1}(x)=v_{n+1}(x)=\bot$ for all
  $x\in\pclocks{\Sigma}$. Thus, $\theta\in L(A)$ and thus,
  $L(A)\supseteq \wL(A)$.
 
\end{proof}

Then, we prove that \emph{weak time successors} enjoy a property which
is reminiscent of time abstract bisimulation. This allows to establish
Theorem~\ref{th:correctness-regaut}.
\begin{lemma}\label{lemma:wts}
  Let $C$ be a set of clocks and let $\cmax$ be a natural
  constant. For any $v_1,v_2\in\valuations{C}$ s.t. $v_1\regeq{\cmax}
  v_2$, for any $t_1\in\posreal$, there exist $t_2$ and $v'\in
  (v_2\wplus t_2)$ s.t. $v_1+t_1\regeq{\cmax}v'$.
\end{lemma}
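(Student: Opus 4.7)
The plan is to construct $t_2$ and $v'$ explicitly by exploiting a partition of the clocks. Define $C_L = \{y \in \pclocks{\Sigma} \cap C \mid v_1(y) > \cmax\}$, the ``flexible'' prophecy clocks (by (C2), equivalently $\{y \in \pclocks{\Sigma} \cap C \mid v_2(y) > \cmax\}$), and $C_S = C \setminus C_L$, the remaining clocks. For clocks in $C_L$, the weak successor gives us freedom in choosing $v'(y)$; for clocks in $C_S$, the definition of weak successor forces $v'(x) = (v_2 + t_2)(x)$.

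The first step is to apply an Alur--Dill-style argument restricted to $C_S$. The key observation is that $\roundEC$ uniformly captures the ``time until the next integer boundary is crossed'' for both history clocks (via $\lceil v(x) \rceil - v(x)$, the time until $x$ next hits an integer going up) and prophecy clocks (via $v(x) - \lfloor v(x) \rfloor$, the time until $x$ next hits an integer going down). Therefore, the sequence of regions visited by $v_1 + s$ restricted to $C_S$, as $s$ grows, is determined by the sorted order of the $\roundEC$ values, which matches that of $v_2 + s$ by (C3). This yields a non-empty interval (or a singleton in degenerate cases when some clock sits at an integer) of valid $t_2$ such that $(v_2 + t_2)$ restricted to $C_S$ is region-equivalent to $(v_1 + t_1)$ restricted to $C_S$.

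The second step is to define $v'(y)$ for $y \in C_L$: if $(v_1+t_1)(y) > \cmax$, set $v'(y)$ to any value strictly greater than $\cmax$ (say, $(v_1+t_1)(y)$ itself); if $(v_1+t_1)(y) \leq \cmax$, pick $v'(y)$ in the same integer-part interval as $(v_1+t_1)(y)$, with fractional part chosen to respect (C3) for the full set of clocks. The weak-successor constraint $v'(y) > \cmax - t_2$ is trivial when $v'(y) > \cmax$, and in the other case follows from $t_1 \geq v_1(y) - (v_1+t_1)(y) > \cmax - (v_1+t_1)(y)$, together with the freedom to pick $t_2$ close to $t_1$ within the Alur--Dill interval.

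The final step is verifying (C1), (C2), (C3) for the pair $(v_1+t_1, v')$. Conditions (C1) and (C2) are straightforward from the construction and from $v_1 \regeq{\cmax} v_2$; (C3) restricted to $C_S$ is handled by the first step. The main obstacle is to check (C3) when mixing clocks from $C_S$ (whose $v'$-values depend on $t_2$) with clocks of $C_L$ that have become ``newly small'' (whose $v'$-values we have assigned directly): one must verify that the fractional parts chosen for $v'(y)$ can simultaneously be fitted into the ordering dictated by the non-flexible clocks. I expect to resolve this by intersecting the intervals of admissible $\roundEC{v'(y)}$ values across all newly-small flexible clocks with the weak-successor constraint, exploiting the continuity of $\roundEC{(v_2 + t_2)(x)}$ as a function of $t_2$ within the Alur--Dill continuum together with the fact that ordering is preserved by the first step.
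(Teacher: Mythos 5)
Your reading of the weak-successor definition is correct (the clocks split into the forced set $C_S$ and the flexible set $C_L$), and your first step (an Alur--Dill argument on $C_S$ via $\roundEC{\cdot}$) is sound. The gap is in how you discharge the weak-successor constraint $v'(y)>\cmax-t_2$ for the \emph{newly small} clocks of $C_L$, i.e.\ in steps~2--3. First, the claimed ``freedom to pick $t_2$ close to $t_1$ within the Alur--Dill interval'' is simply false: with a single history clock $x$, $v_1(x)=0.1$, $v_2(x)=0.9$ and $t_1=0.85$, the admissible $t_2$ form $[0,0.1)$, every point of which is at distance more than $0.75$ from $t_1$. Second, the $C_L$ constraints genuinely cut down the Alur--Dill interval, so your sequential scheme (fix $t_2$ from $C_S$ alone, then assign the $C_L$ clocks) breaks. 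Concretely, take $\cmax=3$ and prophecy clocks $z,y$ with $v_1(z)=2.9$, $v_1(y)=3.05$, $v_2(z)=2.1$, $v_2(y)=100$ (these valuations are $\regeq{\cmax}$-equivalent) and $t_1=1.15$, so that $(v_1+t_1)(z)=1.75$ and $(v_1+t_1)(y)=1.9$. The Alur--Dill interval for $t_2$ dictated by $z$ is $(0.1,1.1)$; but $v'(y)$ must lie in $(1,2)$ and satisfy $v'(y)>3-t_2$, which forces $t_2>1$. Thus almost every step-1 choice (e.g.\ $t_2=0.5$) is fatal and only the sliver $(1,1.1)$ survives. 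Proving that a surviving $t_2$ \emph{always} exists is exactly the content of the lemma, and neither closeness to $t_1$ nor ``continuity of $\roundEC{(v_2+t_2)(x)}$ plus preservation of orderings'' delivers it; indeed no soft bisimulation-style argument can, since the paper shows that regions are not a time-abstract simulation for event clocks. What is actually needed (and true) is an integer-threshold statement: the critical bounds $\cmax-\lceil(v_1+t_1)(y)\rceil$ are \emph{integers}, and comparisons of $t$ and of the quantities $\roundEC{(v+t)(u)}+t$ (which are constant while $v+t$ stays inside one region) against integers are region-invariant; hence some admissible $t_2$ exceeds every threshold that $t_1$ exceeds. Nothing in your outline plays this role, and this is where the proof currently fails.

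The paper's proof avoids this difficulty entirely by a different decomposition. It first proves the lemma when $v_1$ and $v_1+t_1$ lie in \emph{adjacent} regions: there, a prophecy clock can leave the zone ${>}\cmax$ only by landing exactly on $\cmax$, so one may set $v'(y)=\cmax$, which satisfies $v'(y)>\cmax-t_2$ trivially and has $\roundEC{v'(y)}=0$, slotting below every fractional ordering; the valuation $v_3$ is built from $v_2$ accordingly and $v'=v_3+t_2$. The general case then follows by cutting $t_1$ into adjacent-region steps and chaining, using the composition property that $v_3\in(v_2\wplus t_2)$ and $v_2\in(v_1\wplus t_1)$ imply $v_3\in(v_1\wplus(t_1+t_2))$. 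To complete your one-shot approach you must either prove the integer-threshold lemma sketched above, or fall back on the paper's adjacent-region decomposition; as written, the central feasibility step is unsupported.
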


\begin{proof}
    The cases where $v_1\regeq{\cmax}v_1+t_1$ are trivial. We first
  restrict ourselves to the case where $v_1$ and $v_1+t_1$ belong
  to adjacent regions, that is:
  \begin{eqnarray}
    \exists 0<t\leq t_1:
    \left(
      \begin{array}{c}
        \forall 0\leq t'\leq t: v_1+t'\regeq{\cmax}v_1\\
        \textrm{and}\\
        \forall t<t'\leq t_1: v_1+t'\regeq{\cmax}v_1+t_1
      \end{array}\right)\label{eq:7}
  \end{eqnarray} 
  Let us now show how to chose $t_2$. Let $C^0_{v}$ denote the set
  of clocks $x$ s.t. $\roundEC{v(x)}=0$. Under the hypothesis
  (\ref{eq:7}), we have to consider two cases:
  \begin{enumerate}
  \item Either $C^0_{v_1}=\emptyset$ and
    $C^0_{v_1+t_1}\neq\emptyset$. In that case, let $x$ be a clock
    in $C^0_{v_1+t_1}$. We let $t_2=\roundEC{v_2(x)}$
  \item Or $C^0_{v_1}\neq\emptyset$ and
    $C^0_{v_1+t_1}=\emptyset$. In that case, we need to consider
    two sub-cases. If there is $x$ s.t. $\roundEC{v_2(x)}\neq 0$, we
    let $t_2$ be a value
    s.t. $0<t_2<\min\{\roundEC{v_2(x)}\mid\roundEC{v_2(x)}\neq
    0\}$. Otherwise, all the clocks in $v_2$ have a null fractional
    part, and we can take any delay $<1$ for $t_2$: we let
    $t_2=0.1$.
  \end{enumerate}
  
  Now, let us show that there exists $v\in v_2\wplus t_2$
  s.t. $v\regeq{\cmax}v_1+t_1$. For that purpose, we first build a
  valuation $v_3$ as follows. For any history clock $x$, we let
  $v_3(x)=v_2(x)$. For all prophecy clocks $x$ s.t. $v_2(x)\leq\cmax$, or $v_2(x)=\bot$,
  we let $v_3(x)=v_2(x)$ too. For all prophecy clocks $x$
  s.t. $v_2(x)>\cmax$ (and thus $v_1(x)>\cmax$ since
  $v_1\regeq{\cmax}v_2$), we consider two cases. Either
  $(v_1+t_1)(x)>\cmax$. In that case we let
  $v_3(x)=\cmax+t_2+1$. Or $(v_1+t_1)(x)=\cmax$. In that case we
  let $v_3(x)=\cmax+t_2$. Remark that the case
  $(v_1+t_1)(x)<\cmax$ is not possible since we have assumed that
  $v_1(x)>\cmax$ and that $v_1$ and $v_1+t_1$ are in adjacent
  regions.

  We now let $v'= v_3+t_2$. It is easy to check that
  $v'\regeq{\cmax}(v_1+t_1)$. Moreover, $v'\in(v_2\wplus t_2)$,
  since $v_3$ has been obtained from $v_2$ by replacing values larger
  than $\cmax$ by other values larger than $\cmax$.

  To conclude, observe that if $v_3\in(v_2\wplus t_2)$ and
  $v_2\in(v_1\wplus t_1)$, then
  $v_3\in(v_1\wplus(t_1+t_2))$. This allows to handle the case
  where $v_1$ and $v_1+t_1$ are not in adjacent regions: by
  decomposing $t_1$ into a sequence
  $t_1',t_2',\ldots,t_n'$
  s.t. $t_1=t_1'+t_2'+\cdots+t_n'$, and for all $1\leq
  i<n$, $v_1+\sum_{j=1}^it'_j$ and $v_1+\sum_{j=1}^{i+1}t'_{j}$
  are in adjacent regions. Then, applying the reasoning above, we get
  a sequence $t_1'',\ldots, t_n''$ of time delays and a sequence
  $v_0',v_1',\ldots,v_n'$ of valuations s.t. $v_0'=v_2$, for all
  $0\leq i<n$, $v_{i+1}'\in v_i'\wplus t_i''$ and
  $v'_{i+1}\regeq{\cmax}v_1+\sum_{j=1}^{i+1}t_j'$. Thus, $v_n'\in
  v_2\wplus\sum_{j=1}^nt_j''$ and
  $v_n'\regeq{\cmax}v_1+\sum_{j=1}^{n}t_j'=v_1+t_1$.
\end{proof}

We can now prove that:
\begin{theorem}\label{th:correctness-regaut}
  For any \ecta $A=(\Sigma,Q,q_i,\delta,\alpha)$:
  $\L(\Erautomaton{A})\subseteq\untime{\L(A)}$. 
\end{theorem}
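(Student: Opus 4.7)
The plan is to take any accepting run of $\Erautomaton{A}$ on an untimed word $w$ and lift it to an accepting weak run of $A$ on a suitable timed word $\theta$ whose untiming equals $w$. Once this is done, Proposition~\ref{prop:wl-equals-l} gives $\theta \in \wL(A) = L(A)$, and therefore $w \in \untime{L(A)}$. Since $\regeq{\cmax}$ is not a time-abstract bisimulation for \ecta, the role that bisimulation usually plays in the timed automata case has to be taken over by Lemma~\ref{lemma:wts}.

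Concretely, let $(q_0,r_0) w_0 (q_1,r_1) w_1 \cdots (q_{n+1},r_{n+1})$ be an accepting run of $\Erautomaton{A}$. By the existential semantics of the region automaton, for every $0 \leq i \leq n$ one can pick witnesses $v_i \in r_i$, $u_{i+1} \in r_{i+1}$ and $t_i \in \posreal$ such that $(q_i,v_i) \xrightarrow{t_i,w_i} (q_{i+1},u_{i+1})$. In general $u_{i+1} \neq v_{i+1}$, but $u_{i+1} \regeq{\cmax} v_{i+1}$. I would then build the weak run by induction, maintaining the invariant that the valuation $\tilde{v}_i$ reached at step $i$ satisfies $\tilde{v}_i \regeq{\cmax} v_i$. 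The base case sets $\tilde{v}_0 = v_0$, which is initial since $r_0$ is an initial region.

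For the inductive step, given $\tilde{v}_i \regeq{\cmax} v_i$, Lemma~\ref{lemma:wts} produces a delay $\tilde{t}_i$ and a valuation $\tilde{v}_i' \in \tilde{v}_i \wplus \tilde{t}_i$ such that $\tilde{v}_i' \regeq{\cmax} v_i + t_i$. To mimic the $w_i$-transition from $(q_i, v_i+t_i)$, I would reuse the same auxiliary valuation $\overline{v}_i$ on all clocks except $\pclock{w_i}$, for which I would pick a value lying in the same $\regeq{\cmax}$-region as $\overline{v}_i(\pclock{w_i})$. Since clock constraints with constants bounded by $\cmax$ are invariant under $\regeq{\cmax}$, the guard remains satisfied; resetting $\hclock{w_i}$ then yields a successor $\tilde{v}_{i+1}$ region-equivalent to $u_{i+1}$, hence to $v_{i+1}$. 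At the final step, $\tilde{v}_{n+1} \regeq{\cmax} v_{n+1}$, and $r_{n+1}$ being a final region forces $\tilde{v}_{n+1}$ to be final, so the constructed weak run accepts the timed word whose timestamps are the partial sums of the $\tilde{t}_i$.

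The hard part is the discrete-step argument: given a valuation $\tilde{v}_i'$ region-equivalent to one that fires a $w_i$-transition, one must exhibit a matching $w_i$-transition whose successor is itself region-equivalent to the original successor. The subtlety lies in the non-deterministic choice of $\overline{v}$ (the freshly assigned value of $\pclock{w_i}$), which must be exploited by copying the region of $\overline{v}_i(\pclock{w_i})$ on the $\tilde{v}_i'$ side; region-invariance of guards with bounded constants then closes the argument, and Proposition~\ref{prop:wl-equals-l} converts the resulting weak run into a genuine accepting run of $A$.
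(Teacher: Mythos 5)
Your proof follows the paper's own route step for step: existential witnesses $v_i,u_{i+1},t_i$ along the region-automaton run, an induction maintaining region-equivalence between the constructed valuations and the witnesses, Lemma~\ref{lemma:wts} to match each time elapse by a weak time step, and Proposition~\ref{prop:wl-equals-l} to turn the resulting accepting weak run into a genuine run of $A$. The base case, the invocation of Lemma~\ref{lemma:wts}, and the final-region argument are all sound. The gap is in the discrete step, exactly the part you flag as hard. You choose the fresh value of $\pclock{w_i}$ only up to its unit interval (``a value lying in the same $\regeq{\cmax}$-region as $\overline{v}_i(\pclock{w_i})$'') and then appeal to region-invariance of guards. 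This does give guard satisfaction, since guards contain no diagonal constraints and hence only see each clock's interval. It does \emph{not} give your next claim, that the successor $\tilde{v}_{i+1}$ is region-equivalent to $u_{i+1}$: the equivalence $\regeq{\cmax}$ also comprises condition $({\sf C3})$, the agreement of the orderings of the \emph{fractional parts} across clocks, and an interval-only choice of the fresh value can violate it. Since the invariant $\tilde{v}_{i+1}\regeq{\cmax}v_{i+1}$ is precisely the hypothesis needed to apply Lemma~\ref{lemma:wts} at step $i+1$, the induction collapses there.

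Concretely, take $\cmax=1$, let $\tilde{v}_i'$ have $(\pclock{b},\pclock{a})=(0,0.7)$ and $v_i+t_i$ have $(\pclock{b},\pclock{a})=(0,0.3)$; these two valuations are $\regeq{1}$-equivalent, which is all your invariant guarantees. Let the witness transition on $b$ assign $c=0.5$ to $\pclock{b}$, so that in the witness successor the fractional part of $\pclock{a}$ (namely $0.3$) lies strictly below that of $\pclock{b}$ (namely $0.5$). On your side, condition $({\sf C3})$ then forces the fresh value $c'$ to satisfy $c'>0.7$: any $c'\in(0,0.7]$ --- including the witness's own value $c=0.5$ --- produces a successor in a different region than $u_{i+1}$. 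So the admissible choices of $c'$ are dictated by the fractional parts of the \emph{other} clocks of $\tilde{v}_i'$, not by the interval of $c$ alone; this is the idea missing from your argument. The repair is a region-extension property: since $\tilde{v}_i'\regeq{\cmax}v_i+t_i$, there exists $c'$ with the same interval as $c$ \emph{and} with fractional part interleaved among those of $\tilde{v}_i'$'s clocks exactly as the fractional part of $c$ is interleaved among those of $v_i+t_i$, so that $\tilde{v}_i'[\pclock{w_i}:=c']\regeq{\cmax}(v_i+t_i)[\pclock{w_i}:=c]$; from this single equivalence both guard satisfaction and $\tilde{v}_{i+1}\regeq{\cmax}u_{i+1}$ follow. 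For comparison, the paper's proof takes $c':=c$ verbatim and asserts this equivalence directly; as the example above shows, that assertion itself tacitly relies on the same fractional-interleaving argument, so you identified the right pressure point but resolved it with a requirement that is too weak to close the induction.
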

\begin{proof}
  Let $(q_0,r_0)\xrightarrow{w_0}(q_1,r_1)\xrightarrow{w_1}\cdots
  \xrightarrow{w_{n-1}}(q_n,r_n)$ be an accepting run of
  $\Erautomaton{A}$. Let us build, inductively a sequence
  $\overline{t}_0$, $\overline{t}_1$,\ldots, $\overline{t}_{n-1}$ of
  time delays and a sequence $\overline{v}_0$,
  $\overline{v}_1$,\ldots, $\overline{v}_n$ of valuations
  s.t. $\overline{v}_i\in r_i$ for all $0\leq i\leq n$. This will
  allow us to obtain an accepting weak run of $A$. For the base case,
  we let $\overline{v}_0$ be a valuation from $r_0$ and we let
  $\overline{v}_1$ and $\overline{t}_0$ be
  s.t. $\overline{v}_0\xrightarrow{\overline{t}_0,w_0}\overline{v}_1$
  with $\overline{v}_1\in r_1$. Such $\overline{v}_1$ and
  $\overline{t}_0$ are guaranteed to exist by definition of the region
  automaton, and since $(q_0,r_0)\xrightarrow{w_0}(q_1,r_1)$ in this
  region automaton. For the inductive case, we consider $i$ with
  $2\leq i\leq n$ and assume that $\overline{v}_{i-1}$ has been
  defined and is in $r_{i-1}$. Let us show how to build
  $\overline{t}_{i-1}$ and $\overline{v}_i$. Since
  $r_{i-1}\xrightarrow{w_{i-1}}r_i$ in the region automaton, there are
  $v_i\in r_i$, $v_{i-1}\in r_{i-1}$, and $t_{i-1}$
  s.t. $v_{i-1}\xrightarrow{t_{i-1}}v_{i-1}+t_{i-1}\xrightarrow{w_{i-1}}v_i$. Let
  $c$ denote the value $v_i(\pclock{w_{i-1}})$.  Since
  \begin{eqnarray}
    v_{i-1}+t_{i-1}\xrightarrow{w_{i-1}}v_i\label{eq:2}
  \end{eqnarray}
  we know that 
  \begin{eqnarray}
    (v_{i-1}+t_{i-1})[\pclock{w_{i-1}}:=c]\models\psi\label{eq:3}
  \end{eqnarray}
  where $\psi$ is
  the guard of the edge responsible for
  $v_{i-1}+t_{i-1}\xrightarrow{w_{i-1}}v_i$ and that
  \begin{eqnarray}
    v_i=(v_{i-1}+t_{i-1})[\pclock{w_{i-1}}:=c,\hclock{w_{i-1}}:=0].\label{eq:4}
  \end{eqnarray}

  Next, we let $v_{i-1}'$ be a valuation and $\overline{t}_{i-1}$ be a
  time delay s.t. $v_{i-1}'\in\overline{v}_{i-1}\wplus t_{i-1}$ and
  \begin{eqnarray}
    v_{i-1}'\regeq{\cmax}(v_{i-1}+\overline{t}_{i-1}).\label{eq:5}
  \end{eqnarray}
  Such $v_{i-1}'$ and
  $\overline{t}_{i-1}$ are guaranteed to exist by
  Lemma~\ref{lemma:wts}: $\overline{v}_{i-1}\in r_{i-1}$ by induction
  hypothesis and $v_{i-1}\in r_{i-1}$ by construction, hence
  $\overline{v}_{i-1}\regeq{\cmax}\overline{v}_{i-1}$. Then, we let
  \begin{eqnarray}
    \overline{v}_i=v_{i-1}'[\pclock{w_{i-1}}:=c,\hclock{w_{i-1}}:=0]\label{eq:6}
  \end{eqnarray}

  Let us check that $v_{i-1}'\xrightarrow{w_i}\overline{v}_i$. By
  (\ref{eq:5}), $v_{i-1}'$ and $v_{i-1}+t_{i-1}$ are
  equivalent. Hence,
  $v_{i-1}'[\pclock{w_{i-1}}:=c]\regeq{\cmax}(v_{i-1}+t_{i-1})[\pclock{w_{i-1}}:=c]$. Thus,
  by (\ref{eq:3}), $v_{i-1}'[\pclock{w_{i-1}}:=c]\models\psi$, and the
  same transition can be fired from $v_{i-1}'$, leading to
  $\overline{v}_i$, by~(\ref{eq:6}). Finally,
  by~(\ref{eq:4}),~(\ref{eq:6}) and~(\ref{eq:5}), we deduce that
  $\overline{v}_i\regeq{\cmax} v_i\in r_i$, hence $\overline{v}_i\in
  r_i$.
 
  By construction,
  $(q_0,\overline{v}_0)\xrightarrow{\overline{t}_0,w_0}(q_1,\overline{v}_1)
  \xrightarrow{\overline{t}_1,w_1}\cdots(q_n,\overline{v}_n)$ is an
  accepting weak run of $A$ on $\theta$ with
  $\untime{\theta}=w$. Thus,
  $\wL(\Erautomaton{A})\subseteq\untime{\L(A)}$. Since
  $\wL(\Erautomaton{A})=L(\Erautomaton{A})$, by
  Proposition~\ref{prop:wl-equals-l}, we have
  $L(\Erautomaton{A})\subseteq\untime{L(A)}$.
\end{proof}

\paragraph{Size of the existential region automaton} The number of
Alur-Dill regions on $n$ clocks and with maximal constant $\cmax$ is
at most $R(n,\cmax)=n!\times 2^n\times (2\times\cmax+2)^n$
\cite{AD94}. Adapting this result to take into account the $\bot$
value, we have: $|\Reg{\clocks{\Sigma}}{\cmax}|\leq R(2\times
|\Sigma|, \cmax+1)$.  Hence, the number of locations of
$\Erautomaton{A}$ for an $\ecta$ $A$ with $m$ locations and alphabet
$\Sigma$ is at most $m\times R(2\times|\Sigma|,\cmax+1)$.  In
\cite{297329}, a technique is given to obtain a finite automaton
recognizing $\untime{L(A)}$ for all \ecta~$A$: first transform $A$
into a non-deterministic timed automaton \cite{AD94} $A'$
s.t. $L(A')=L(A)$, then compute the region automaton of $A'$. However,
building $A'$ incurs a blow up in the number of clocks and locations,
and the size of the region automaton of $A'$ is at most $m\times
2^K\times R(K,\cmax)$ where $K=6\times|\Sigma|\times(\cmax+2)$ is an
upper bound on the number of atomic clock constraints in $A$. Our
construction thus yields a smaller automaton.

\section{Zones and event-clocks\label{sec:zones-event-clocks}}
In the setting of timed automata, the \emph{zone datastructure}
\cite{Dill89} has been introduced as an effective way to improve the
running time and memory consumption of on-the-fly algorithms for
checking emptiness. In this section, we \emph{adapt} this notion to
the framework of \ecta, and discuss forward and backward analysis
algorithms.  Roughly speaking, a \emph{zone} is a symbolic
representation for a set of clock valuations that are defined by
constraints of the form $x-y\prec c$, where $x,y$ are clocks, $\prec$
is either $<$ or $\leq$, and $c$ is an integer constant. Keeping the
difference between clock values makes sense in the setting of timed
automata as all the clocks have always real values and the difference
between two clock values is an invariant over the elapsing of time. To
adapt the notion of zone to \ecta, we need to overcome two
difficulties. First, prophecy and history clocks evolve in different
directions with time elapsing. Hence, it is not always the case that
if $v(x)-v(y)=c$ then $(v+t)(x)-(v+t)(y)=c$ for all $t$ (for instance
if $x$ is a prophecy clocks and $y$ an history clock). However, the
\emph{sum} of clocks of different types is now an invariant, so event
clock zones must be definable, either by constraints of the form
$x-y\prec c$, if $x$ and $y$ are both history or both prophecy clocks,
or by constraints of the form $x+y\prec c$ otherwise. Second, clocks
can now take the special value $\bot$.  Formally, we introduce the
notion of \ezone{} as follows.

\begin{definition}\label{def:zone}
  For a set $C$ of clocks over an alphabet $\Sigma$, an \emph{\ezone}
  is a subset of $\valuations{C}$ that is defined by a conjunction of
  constraints of the form $x = \bot$; $x \sim c$; $x_1-x_2 \sim c$ if
  $x_1,x_2\in\hclocks{\Sigma}$ or $x_1,x_2\in \pclocks{\Sigma}$; and
  $x_1+x_2\sim c$ if either $x_1\in\hclocks{\Sigma}$ and
  $x_2\in\pclocks{\Sigma}$ or $x_1\in \pclocks{\Sigma}$ and
  $x_2\in\hclocks{\Sigma}$, with $x,x_1,x_2\in C$,
  ${\sim}\in\{\leq,\geq, <, >\}$ and $c\in \mathbb{Z}$.
\end{definition}

\paragraph{Event-clock Difference Bound Matrices}
In the context of timed automata, Difference Bound Matrices (DBMs for
short) have been introduced to represent and manipulate zones
\cite{Bellman57,Dill89}. Let us now adapt DBMs to event clocks.  In
order to adapt DBMs to \ezone s, we need to be able to $(i)$ encode
contraints of the form $x+y\prec c$ and of the form $x'-y'\prec c$,
depending on the types of $x$, $y$, $x'$ and $y'$, $(ii)$ encode
constraints of the form $x=\bot$, and $(iii)$ encode the fact that a
variable is not constrained by the zone. Indeed, in a DBM, this is
encoded by the pair of constraints $x\geq 0$ and $x<+\infty$. This is
not sound in our case since $0\leq x<+\infty$ implies that $x\neq
\bot$. Thus, we introduce a special symbol $?$ to denote the absence
of constraint.

Formally, an EDBM $M$ of the set of clocks $C=\{x_1,\ldots, x_n\}$ is
a $(n+1)$ square matrix of elements from $\bigl (\mathbb{Z}\times
\{<,\leq\}\bigr )\cup \{(\infty,<), (\bot,=),(?,=)\}$ s.t. for all
$0\leq i,j,\leq n$: $m_{i,j}=(\bot,=)$ implies $i=0$ or $j=0$ (i.e.,
$\bot$ can only appear in the first position of a row or
column). Thus, a constraint of the form $x_i=\bot$ will be encoded
with either $m_{i,0}=(\bot,=)$ or $m_{0,i}=(\bot,=)$. As in the case
of DBMs, we assume that the extra clock $x_0$ is always equal to
zero. Moreover, since prophecy clocks decrease with time evolving,
they are encoded by their \emph{opposite value} in the matrix. Hence
the EDBM naturally encodes \emph{sums} of variables when the two
clocks are of different types. Each element $(m_{ij},\prec_{ij})$ of
the matrix thus represents either the constraint $x_i-x_j\prec_{ij}
m_{ij}$ or the constraint $x_i+x_j\prec_{ij} m_{ij}$, depending on the
type of $x_i$ and $x_j$.  Finally, the special symbol $?$ encodes the
fact that the variable is not constrained (it can take any real value,
or the $\bot$ value). Formally, an EDBM $M$ on set of clocks
$C=\{x_1,\ldots,x_n\}$ represents the zone $\denote{M}$ on set of
clocks $C$ s.t. $v\in\denote{M}$ iff for all $0\leq i,j\leq n$:
\textbf{if} $M_{i,j}=(c,\prec)$ with $c\neq\ ?$
\textbf{then} $\plmin{v}(x_i)-\plmin{v}(x_j)\prec c$ (assuming
$\plmin{v}(x_0)$ denotes the value $0$ and assuming that for all
$k\in\mathbb{Z}\cup\{\bot\}$:
$\bot+k=\bot-k=k+\bot=k-\bot=\bot$). When $\denote{M}=\emptyset$, we
say that $M$ is \emph{empty}. In the sequel, we also rely on the
$\leq$ ordering on EDBM elements. We let $(m;\prec)\leq (m';\prec')$
iff one of the following holds: either $(i)$ $m'={?}$; or $(ii)$
$m,m'\in\mathbb{Z}\cup\{\infty\}$ and $m<m'$; or $(iii)$ $m=m'$ and
either $\prec=\prec'$ or $\prec'=\leq$.

As an example, consider the two following EDBMs that both represent
$x_1=\bot \wedge 0 < x_3 - x_4 < 1 \wedge x_2 + x_4 \leq 2$ (where
$x_1,x_2$ are prophecy clocks, and $x_3, x_4$ are history clocks): 
$$
\begin{pmatrix}
(0,\leq) & (\bot,=) & (?,=) & (?,=) & (?,=)\\
(\bot,=) & (?,=) & (?,=) & (?,=) & (?,=)\\
(0,\leq) & (?,=) & (0, \leq) & (?,=) & (?,=)\\
(?,=) & (?,=) & (?,=) & (0,\leq) & (1, <)\\
(?,=) & (?,=) & (2,\leq) & (0,<) & (0,\leq)\\
\end{pmatrix}
$$

$$
\begin{pmatrix}
(0,\leq) & (\bot,=) & (\infty,<) & (0,\leq) & (0,\leq)\\
(\bot,=) & (?,=) & (?,=) & (?,=) & (?,=)\\
(0,\leq) & (?,=) & (0, \leq) & (0,\leq) & (0,\leq)\\
(\infty,<) & (?,=) & (\infty,<) & (0,\leq) & (1, <)\\
(\infty,<) & (?,=) & (2,\leq) & (0,<) & (0,\leq)\\
\end{pmatrix}
$$

\paragraph{Normal form EDBMs} As in the case of DBMs, we define a
\emph{normal form} for EDBM, and show how to turn any EDBM $M$ into a
normal form EDBM $M'$ s.t. $\denote{M}=\denote{M'}$.  A non-empty EDBM
$M$ is in \emph{normal form} iff the following holds: $(i)$ for all
$1\leq i\leq n$: $M_{i,0}=(\bot,=)$ iff $M_{0,i}=(\bot,=)$ and
$M_{i,0}=(?,=)$ iff $M_{0,i}=(?,=)$, $(ii)$ for all $1\leq i\leq n$:
$M_{i,0}\in\{(\bot,=),(?,=)\}$ implies $M_{i,j}=M_{j,i}=(?,=)$ for all
$1\leq j\leq n$, $(iii)$ for all $1\leq i,j \leq n: M_{i,j}=(?,=)$ iff
either $M_{i,0}\in\{(?,=),(\bot,=)\}$ or
$M_{j,0}\in\{(?,=),(\bot,=)\}$ and $(iv)$ the matrix $M'$ is a
\emph{normal form DBM} \cite{Dill89}, where $M'$ is obtained by
projecting away all lines $1\leq i\leq n$
s.t. $M_{i,0}\in\{(?,=),(\bot,=)\}$ and all columns $1\leq j\leq n$
s.t. $M_{0,j}\in\{(?,=),(\bot,=)\}$ from $M$. To canonically represent
the empty zone, we select a particular EDBM $M_\emptyset$
s.t. $\denote{M_\emptyset}=\emptyset$. For example, the latter EDBM of
the above example is in normal form.

Then, given an EDBM $M$, Algorithm~\ref{algo:normEDBM} allows to
compute a normal form EDBM $M'$ s.t. $\denote{M}=\denote{M'}$. This
algorithm relies on the function \dbmnorm{$M$,$S$}\!, where $M$ is an
$(\ell+1)\times(\ell+1)$ EDBM, and $S\subseteq\{0,\ldots,\ell\}$.
\dbmnorm{$M$,$S$} applies the classical normalisation algorithm for
DBMs \cite{Dill89} on the DBM obtained by projecting away from $M$ all
the lines and columns $i\not\in S$. Algorithm~\ref{algo:normEDBM}
proceeds in three steps. In the first loop, we look for lines
(resp. columns) $i$ s.t. $M_{i,0}$ (resp. $M_{0,i}$) is $(\bot,=)$,
meaning that there is a constraint imposing that $x_i=\bot$. In this
case, the corresponding $M_{0,i}$ (resp. $M_{i,0}$) must be equal to
$(\bot,=)$ too, and all the other elements in the $i$th line and
column must contain $(?,=)$. If we find a $j$ s.t.  $M_{i,j}\neq
(?,=)$ or $M_{j,i}\neq (?,=)$, then the zone is empty, and we return
$M_\emptyset$. Then, in the second loop, the algorithm looks for lines
(resp. columns) $i$ with the first element equal to $(?,=)$ but
containing a constraint of the form $(c,\prec)$, which imposes that
the variable $i$ must be different from~$\bot$. We record this
information by replacing the $(?,=)$ in $M_{i,0}$ (resp. $M_{0,i}$) by
the weakest possible constraint that forces $x_i$ to have a value
different from~$\bot$. This is either $(0,\leq)$ or $(\infty, <)$,
depending on the type of $x_i$ and is taken care by the \setc{}
function. At this point the set $S$ contains the indices of all
variables that are constrained to be real. The algorithm finishes by
calling the normalisation function for DBMs. Remark, in particular,
that the algorithm returns $M_\emptyset$ iff $M$ is empty which also
provides us with a test for EDBM emptiness.

\begin{proposition}
  For all EDBM $M$, \edbmnorm{$M$} returns a normal form EDBM $M'$
  s.t. $\denote{M'}=\denote{M}$.
\end{proposition}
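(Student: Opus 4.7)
The plan is to decompose \edbmnorm{} into its three loops and argue that each preserves $\denote{M}$ while progressively enforcing clauses (i)--(iv) of the normal form definition. The underlying observation is that, through the $\plmin{v}$-encoding, the entries of an EDBM uniformly represent constraints of the shape $\plmin{v}(x_i)-\plmin{v}(x_j)\prec c$; once every clock is forced to take a real value, the object becomes an ordinary DBM in the variables $\plmin{v}(x_1),\dots,\plmin{v}(x_n)$, so the classical results of~\cite{Dill89} apply verbatim.

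In the first loop I would handle $\bot$-propagation. If some $M_{i,0}$ or $M_{0,i}$ equals $(\bot,=)$, then every $v\in\denote{M}$ satisfies $v(x_i)=\bot$, hence $\plmin{v}(x_i)=\bot$. Under the convention $\bot\pm k=\bot$, any other entry $(c,\prec)$ on row or column $i$ with $c\in\mathbb{Z}\cup\{\infty\}$ encodes an unsatisfiable constraint, which justifies the early return of $M_\emptyset$. Otherwise, symmetrising the $(\bot,=)$ marker across $M_{i,0}$ and $M_{0,i}$ and replacing every other entry of row/column $i$ by $(?,=)$ removes only vacuous constraints, so $\denote{M}$ is unchanged. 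This establishes clauses (i) and (ii).

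In the second loop I would deal with variables forced to be real: if $M_{i,0}=(?,=)$ but some bound $(c,\prec)$ with $c\in\mathbb{Z}\cup\{\infty\}$ appears elsewhere in row or column $i$, then every $v\in\denote{M}$ must have $\plmin{v}(x_i)\in\mathbb{R}$. The \setc{} function installs the weakest constraint expressing $v(x_i)\neq\bot$: for a history clock, $0\leq\plmin{v}(x_i)<\infty$ translates to $M_{0,i}=(0,\leq)$ and $M_{i,0}=(\infty,<)$; for a prophecy clock, $-\infty<\plmin{v}(x_i)\leq 0$ yields $M_{i,0}=(0,\leq)$ and $M_{0,i}=(\infty,<)$. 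In both cases these bounds are already implied by the constraint that triggered the update, so $\denote{M}$ is preserved, and after the loop $S$ is exactly the set of indices of variables constrained to real values.

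Finally I would invoke \dbmnorm{$M,S$}: on the submatrix restricted to $S\cup\{0\}$, the EDBM is a genuine DBM on the $\plmin{v}$-variables, and classical shortest-path closure either tightens every entry to its canonical value, yielding a normal-form DBM with the same denotation as the submatrix, or exposes a negative cycle, in which case we return $M_\emptyset$. Since entries with $i\notin S$ or $j\notin S$ are left as $(?,=)$, clauses (iii) and (iv) follow, and composing the three denotation-preserving phases gives $\denote{M'}=\denote{M}$. The \textbf{main technical obstacle} is keeping the bookkeeping straight between the difference-of-history, difference-of-prophecy, and sum-across-types cases: all three collapse to plain differences after the $\plmin{v}$-transform, but one must verify that \setc{} installs the correct signs for each clock type and that the $\bot$-handling of phase~1 correctly interacts with the mixed-sign entries before phase~3 is invoked.
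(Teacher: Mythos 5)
Your proposal supplies a proof where the paper in fact gives none: the proposition is stated bare, and the only justification in the paper is the informal three-step walkthrough of Algorithm~\ref{algo:normEDBM} in the surrounding prose. Your decomposition into three denotation-preserving phases, together with the key observation that the $\plmin{v}$-encoding turns the real-forced part of an EDBM into an ordinary DBM to which the classical results of \cite{Dill89} apply, is exactly the argument the paper intends, so your route is the paper's own, made explicit rather than a different one.

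Two places where your write-up diverges from the algorithm as printed deserve attention. The first is harmless: \setc{} does not install $(0,\leq)$ in $M_{0,i}$ for a real-forced history clock $x_i$. Its test bears on the \emph{row} clock, and $x_0$ is neither a prophecy nor a history clock, so $M_{0,i}$ receives $(\infty,<)$; the bound $(0,\leq)$ is installed only when the row clock is a prophecy clock and the column clock is a history clock or $x_0$. Both bounds are implied by realness and both force $x_i\neq\bot$, so your conclusion survives, but your sign bookkeeping for the history case is inverted. The second is a genuine gap, inherited from the paper but asserted unconditionally in your phase~1: the first loop does not ``replace every other entry of row/column $i$ by $(?,=)$''; it checks the entries $M_{i,j},M_{j,i}$ only for $1\leq j\leq n$ and then overwrites $M_{i,0}$ and $M_{0,i}$ with $(\bot,=)$ \emph{without inspecting them}. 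Consequently, if for instance $M_{i,0}=(\bot,=)$ while $M_{0,i}$ is a numeric bound and the rest of row and column $i$ is $(?,=)$, then $\denote{M}=\emptyset$ (a numeric constraint is unsatisfiable once $v(x_i)=\bot$), yet the algorithm erases the numeric bound and may return a non-empty EDBM. On such inputs your claim that phase~1 ``removes only vacuous constraints, so $\denote{M}$ is unchanged'' fails. A correct proof must either extend the phase-1 emptiness test to the entries $M_{i,0}$ and $M_{0,i}$ themselves, or restrict the proposition to EDBMs in which the pair $\left(M_{i,0},M_{0,i}\right)$ never mixes $(\bot,=)$ with a numeric bound; as it stands, this corner case falsifies both the algorithm's universal correctness claim and your phase-1 step.
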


\begin{algorithm}[t]
  \edbmnorm{$M$}
  \Begin{
    Let $S = \{0\}$ \;
    \ForEach{$1\leq i\leq n$ s.t. $M_{i,0}=(\bot,=)$ or $M_{0,i}=(\bot,=)$}{
      \lIf{$\exists 1\leq j\leq n$ s.t. $M_{i,j}\neq (?,=)$ or $M_{j,i}\neq(?,=)$} {\Return{$M_\emptyset$\;}}
      $M_{i,0} \leftarrow (\bot,=)$ ; $M_{0,i}\leftarrow (\bot,=)$ \;
    }

    \ForEach{$0\leq i,j\leq n$ s.t. $M_{i,j}\notin\{ (?,=),(\bot,=)\}$}{
      $S\leftarrow S\cup\{i,j\}$ \;
         
    }
    \lForEach{$i,j\in S$}{
      \setc{$M_{i,j}$} \;
    }
    $M'\leftarrow$ \dbmnorm{$M$,$S$} \;
    \lIf{$M'=\texttt{Empty}$}{\Return{$M_\emptyset$} \;}
    
    \Return{$M'$} \;
  }
  \medskip
  
  \setc{$M_{i,j}$}
  \Begin {
    \If{$M_{i,j} = (?,=)$} {
      \lIf{$x_i\in\pclocks{\Sigma}$ and ($x_j\in\hclocks{\Sigma}$ or $x_j=x_0$)}{$M_{i,j}\leftarrow (0,\leq)$ \;}
      \lElse{$M_{i,j}\leftarrow (\infty,<)$ \;}
    }
  }
  \caption{A normalisation algorithm for EDBMs.\label{algo:normEDBM}}
\end{algorithm}

\paragraph{Operations on zones}
The four basic operations we need to perform on \ezone s are: $(i)$
\emph{future} of an \ezone{} $Z$ : $\future{Z}=\{v\in
\mathcal{V}(\clocks{\Sigma}) \mid \exists v'\in Z, t\in\posreal:
v=v'+t\}$; $(ii)$~\emph{past} of an \ezone{} $Z$ :
$\overleftarrow{Z}=\{v\in\mathcal{V}(\clocks{\Sigma})\mid \exists
t\in\posreal: v+t\in Z\}$; $(iii)$~\emph{intersection} of two
\ezones{} $Z$ and $Z'$; and $(iv)$ \emph{release} of a clock $x$ in
$Z$: $\release_x(Z) = \{v[x:=d]\mid v\in Z, d\in\posreal\cup\{\bot\}\}$.
Moreover, we also need to be able to test for inclusion of two zones
encoded as EDBMs.  Let $M$, $M_1$ and $M_2$ be EDBMs in normal form,
on $n$ clocks. Then:
\begin{description}
\item[Future] If $M=M_\emptyset$, we let
  $\future{M}=M_\emptyset$. Otherwise, we let $\future{M}$ be s.t.:
\begin{eqnarray*}
  \future{M}_{i,j}&=&\begin{cases}
    (0,\leq) & \textrm{if $M_{ij}\notin\{(\bot,=),(?,=)\}$, $j=0$ and $x_i\in\pclocks{\Sigma}$}\\
    (\infty, <) & \textrm{if $M_{ij}\notin\{(\bot,=),(?,=)\}$, $j=0$ and $x_i\in\hclocks{\Sigma}$}\\
    M_{i,j} & \textrm{otherwise}\\
  \end{cases}
\end{eqnarray*}
\item[Past] If $M=M_\emptyset$, we let
  $\past{M}=M_\emptyset$. Otherwise, we let $\past{M}$ be s.t. for all
  $i$, $j$:
\begin{eqnarray*}
  \past{M}_{i,j}&=&\begin{cases}
    (\infty,<) & \textrm{if $M_{ij}\notin\{(\bot,=),(?,=)\}$, $i=0$ and $x_j\in \pclocks{\Sigma}$}\\
    (0,\leq) & \textrm{if $M_{ij}\notin\{(\bot,=),(?,=)\}$, $i=0$ and $x_j\in \hclocks{\Sigma}$}\\
    M_{i,j} & \textrm{otherwise}
  \end{cases}
\end{eqnarray*}
\item[Intersection] We consider several cases. If $M^1=M_\emptyset$ or
  $M^2=M_\emptyset$, we let $M^1\cap M^2=M_\emptyset$. If there are
  $0\leq i,j\leq n$ s.t. $M^1_{i,j}\not\leq M^2_{i,j}$ and
  $M^2_{i,j}\not\leq M^1_{i,j}$, we let $M^1\cap M^2=M_\emptyset$
  too. Otherwise, we let $M^1\cap M^2$ be the EDBM $M'$ s.t for all
  $i,j$: $M'_{i,j}=min(M^1_{i,j},M^2_{i,j})$.
\item[Release] Let $x$ be an event clock. In the case where
  $M=M_\emptyset$, we let $\release_x(M)=M_\emptyset$. Otherwise, we
  let $\release_x(M)$ be the EDBM s.t. for all $i,j$:
  \begin{eqnarray*}
    \release_x(M)_{i,j}&=&
    \begin{cases}
      M_{i,j}&\textrm{if }x_i\neq x\textrm{ and } x_j\neq x\\
      (?,=)&\textrm{otherwise}
    \end{cases}
  \end{eqnarray*}
\item[Inclusion] We note $M^1\subseteq M^2$ iff $M^1_{i,j}\leq
  M^2_{i,j}$ for all $0\leq i,j\leq n$.
\end{description}

\begin{proposition}\label{prop:construction-EDBMs}\label{prop:inclusion-EDBMs}
  Let $M,M^1,M^2$ be EDBMs in normal form, on set of clocks $C$. Then,
  $(i)$ $\future{\denote{M}}=\denote{\future{M}}$, $(ii)$
  $\past{\denote{M}}=\denote{\past{M}}$, $(iii)$ $\denote{M^1\cap M^2}
  = \denote{M^1}\cap\denote{M^2}$, $(iv)$ for all clock $x\in C$,
  $\release_x(\denote{M})=\denote{\release_x(M)}$ and $(v)$
  $\denote{M^1}\subseteq\denote{M^2}$ iff $M^1\subseteq M^2$.
\end{proposition}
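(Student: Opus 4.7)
The plan is to handle the five claims one by one, exploiting the fact that EDBMs encode constraints in terms of $\plmin{v}$, for which differences and sums between clocks of appropriate types are time-invariant.

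For (i) and (ii), the core observation is that for any two clocks $x_i, x_j$ with $i,j\neq 0$, the quantity $\plmin{v}(x_i) - \plmin{v}(x_j)$ is invariant under time elapsing, since both prophecy and history clocks contribute with the correct sign to $\plmin{v}$. Consequently, every off-diagonal constraint is preserved by $\future{\cdot}$ and $\past{\cdot}$, so I only need to analyse entries of the form $M_{i,0}$ and $M_{0,j}$ (i.e.\ constraints involving $x_0=0$). For $\future{M}$, I would check by case analysis on the type of $x_i$: the bound $\plmin{v}(x_i)-0\prec c$ becomes unrestricted when $x_i$ is a history clock (since $v(x_i)$ may grow arbitrarily), which yields $(\infty,<)$, and becomes $(0,\leq)$ when $x_i$ is a prophecy clock (since $v(x_i)$ may decrease down to $0$, so $\plmin{v}(x_i)=-v(x_i)\leq 0$). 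The lower-bound entries $M_{0,i}$ are preserved, because time elapsing cannot push $v(x_i)$ beyond its current value in the relevant direction. The argument for $\past{M}$ is symmetric, swapping the roles of rows and columns indexed by $0$.

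For (iii), I would split on the cases in the definition. If either operand is $M_\emptyset$, the result is trivially empty. In the generic case, the equality $v\in \denote{M^1}\cap\denote{M^2}$ iff $v$ satisfies every constraint of $M^1$ and of $M^2$ reduces to taking the pointwise minimum of bounds, which is exactly $\min(M^1_{i,j},M^2_{i,j})$ under the $\leq$ order. The only subtlety is the emptiness test: $M^1_{i,j}\not\leq M^2_{i,j}$ and $M^2_{i,j}\not\leq M^1_{i,j}$ can only occur when one entry is $(\bot,=)$ and the other is a real bound $(c,\prec)$ with $c\neq\,?$, which forces $x_i$ (or $x_j$) to be simultaneously $\bot$ and a real, so the intersection must be empty. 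Claim (iv) is immediate from the definition of $\release_x$: every constraint of $M$ not involving $x$ is preserved, every constraint mentioning $x$ is replaced by the ``no constraint'' symbol $(?,=)$, and conversely a valuation satisfying those remaining constraints can be extended to an element of $\release_x(\denote{M})$ by choosing any value (or $\bot$) for $x$.

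The main obstacle will be (v), the inclusion characterisation, because the ``only if'' direction crucially exploits the normal form. For the ``if'' direction, $M^1\subseteq M^2$ componentwise gives, for each constraint $\psi_{i,j}$ of $M^2$, a stronger constraint in $M^1$; so every $v\in\denote{M^1}$ satisfies $\psi_{i,j}$. For ``only if'', I would argue contrapositively: if some entry of $M^2$ is strictly tighter, or differs in the $\bot/?$ dimension, from the corresponding entry of $M^1$, I would exhibit a valuation in $\denote{M^1}\setminus\denote{M^2}$. This uses normal form in a key way: in a normal form EDBM each entry is a \emph{tight} bound realised by some valuation in the zone (this is the standard property of closed DBMs, transferred through the projection step of \edbmnorm), and the treatment of $\bot$ and $?$ guarantees that $M_{i,0}=(\bot,=)$ exactly when $x_i$ is forced to $\bot$, while $M_{i,0}=(?,=)$ exactly when $x_i$ is unconstrained. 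Combining these tightness properties with a case analysis on the type of disagreement produces the required separating valuation, completing the proof.
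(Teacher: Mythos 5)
Your parts (iii) and (v) are essentially sound, and your direct, contrapositive argument for (v) is a legitimate alternative to the paper's, which instead derives (v) from (iii) via the chain $\denote{M^1}\subseteq\denote{M^2}$ iff $\denote{M^1\cap M^2}=\denote{M^1}$ iff $\min(M^1_{i,j},M^2_{i,j})=M^1_{i,j}$ for all $i,j$. The genuine gap is in (i), (ii) and (iv): for these items your argument never invokes the normal-form hypothesis, yet the statements are \emph{false} without it, so no such argument can be complete. Concretely, your entry-by-entry case analysis proves only the easy inclusion $\future{\denote{M}}\subseteq\denote{\future{M}}$: it shows that every time successor of a valuation of $\denote{M}$ satisfies the relaxed entries (off-diagonal entries are invariant, row-$0$ entries are relaxed according to clock type, column-$0$ entries are kept). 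It says nothing about the converse inclusion, which is where all the work lies: given $v\in\denote{\future{M}}$, you must exhibit a \emph{single} delay $t\geq 0$ and a valuation $v_M\in\denote{M}$ with $v=v_M+t$. To see that normal form is indispensable here, take two history clocks $\hclock{a},\hclock{b}$ and the EDBM with entries $M_{1,0}=M_{0,1}=M_{2,0}=M_{0,2}=(0,\leq)$ and $M_{1,2}=M_{2,1}=(\infty,<)$: it denotes the single valuation $\hclock{a}=\hclock{b}=0$, whose future is the half-line $\hclock{a}=\hclock{b}\geq 0$, whereas the entrywise relaxation $\future{M}$ denotes the whole quadrant $\hclock{a}\geq 0\wedge \hclock{b}\geq 0$. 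The paper's proof of (i) is devoted almost entirely to this converse direction: it writes down the system of inequalities that the witness delay $t$ must satisfy (one interval per clock, whose shape depends on the clock's type) and proves the system consistent by checking the bounds \emph{pairwise}, using the triangle-inequality tightness of normal-form entries, e.g. $(m_{ij},\prec_{ij})\leq(m_{i0}+m_{0j},\prec)$. Your proposal has no counterpart of this step, for (i) or for the symmetric case (ii).

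The same defect affects (iv), which is not ``immediate from the definition''. The nontrivial direction is: given $v\in\denote{\release_x(M)}$, produce a value $c\in\posreal\cup\{\bot\}$ such that the valuation agreeing with $v$ outside $x$ and mapping $x$ to $c$ lies in $\denote{M}$; only then is $v$ obtained from an element of $\denote{M}$ by reassigning $x$. Choosing ``any value (or $\bot$)'' does not work: the chosen $c$ must satisfy row and column $k$ of $M$, so its existence is a one-variable elimination problem, solvable precisely because $M$ is in normal form (the paper notes that otherwise some constraint of $M$ could be strengthened without changing $\denote{M}$, contradicting normality). In short, the tightness property you correctly identified as essential for (v) is equally essential --- and missing from your argument --- in (i), (ii) and (iv); you located the difficulty of the proposition in the wrong place.
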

\begin{proof}
  \begin{enumerate}
\item In the case where $M=M_\emptyset$ the proof is trivial. Otherwise,
  $M$ is non-empty, since it is in normal form. We
  assume that $M$ is an EDBM on set of clocks $C=\{x_1,\ldots, x_n\}$,
  that for all $0\leq i,j\leq n$: $M_{i,j}=(m_{i,j},\prec_{i,j})$ and
  that $\future{M}=(m'_{i,j},\prec'_{i,j})$.  It is easy to see that
  any $v\in\future{\denote{M}}$ satisfies the constraints of
  $\denote{\future{M}}$. Thus,
  $\future{\denote{M}}\subseteq\denote{\future{M}}$.

 Consider now a valuation $v\in
  \denote{\future{M}}$.  We need to find a delay $t\in\posreal$ such
  that there exists $v_M\in \denote{M}$ such that $v_M +t = v$. This
  amounts to solving the following system of inequalities:
{\small 
  \begin{displaymath}
    \begin{cases}
      -m_{i0} - v(x_i) \prec_{i0} t \prec_{0i} m_{0i} - v(x_i) & \textrm{for all $x_i\in\pclocks{\Sigma}\cap C$ such that $m_{0i}\notin\{\bot,?\}$}\\
      v(x_i) - m_{i0}\prec_{i0} t \prec_{0i} v(x_i) + m_{0i} & \textrm{for all $x_i\in\hclocks{\Sigma}\cap C$ such that $m_{0i}\notin\{\bot,?\}$} \\
      0\leq t
    \end{cases}
  \end{displaymath}
}
  with the convention that $\infty+c=\infty-c=\infty$ and that
  $-\infty+c=-\infty-c=-\infty$ for all $c\in\mathbb{N}$.  We show that
  the set of solutions is not empty, i.e. that all inequalities are
  pairwise coherent.

  Since for all $x_i\in\pclocks{\Sigma}\cap C$, $(m'_{0i},\prec'_{0i})
  = (m_{0i},\prec_{0i})$, we know that $v(x_i)\prec_{0i} m_{0i}$ and
  since for all $x_i\in\hclocks{\Sigma}\cap C$ $(m'_{0i},\prec'_{0i})
  = (m_{0i},\prec_{0i})$ , we also know that $-m_{0i}\prec_{0i}
  v(x_i)$. Then, none of the inequalities forces $t$ to be negative.

  Let now $x_i,x_j$ be two prophecy clocks
  s.t. $m_{0,i}\not\in\{\bot,?\}$ and $m_{0,j}\not\in\{\bot,?\}$. For
  all $v_M\in\denote{M}$, $-m_{i0}\prec_{i0} v_M(x_i) \prec_{0i}
  m_{0i}$, and $-m_{j0}\prec_{j0} v_M(x_j) \prec_{0j} m_{0j}$, then
  $-m_{i0} - m_{0j} \prec_1 v_M(x_i)-v_M(x_j)\prec_2 m_{0i}+m_{j0}$,
  where $\prec_1 = \leq$ iff $\prec_{i0} = \leq$ and $\prec_{0j} =
  \leq$ $\prec_2=\leq$ iff $\prec_{0i} = \leq$ and $\prec_{j0} =
  \leq$.  Since $M$ is in normal form, $(m_{ji}, \prec_{ji}) \leq
  (m_{0i} + m_{j0}, \prec_2)$ and $(m_{ij},\prec_{ij})\leq (m_{i0} +
  m_{0j},\prec_1)$. Since $(m'_{ij},\prec'_{ij})=(m_{ij},\prec_{ij})$
  and $(m'_{ji},\prec'_{ji})=(m_{ji},\prec_{ji})$, we deduce that
  $-m_{i0} - m_{0j} \prec_1 v(x_i)-v(x_j)\prec_2
  m_{0i}+m_{j0}$. Hence, $-m_{i0}-v(x_i) \prec_1 m_{0j} - v(x_j)$ and
  $-m_{j0} - v(x_j) \prec_2 m_{0i} - v(x_i)$. Then the constraints on
  $t$ deduced from $x_i$ and $x_j$ are coherent. With the same
  arguments, we obtain that the constraints on $t$ deduced from
  $x_i,x_j\in\hclocks{\Sigma}\cap C$ are coherent too.
 
  Consider now $x_i\in\pclocks{\Sigma}\cap C$ and
  $x_j\in\hclocks{\Sigma}\cap C$. Then again, since any valuation
  ${v}_M$ in $\denote{M}$ satisfies $-m_{i0}-m_{0j} \prec_1 {v}_M(x_i)
  + {v}_M(x_j) \prec_2 m_{0i} + m_{j0}$, so does $v$, and one can
  deduce that $-m_{i0} - v(x_i) \prec_1 v(x_j)+m_{0j}$ and
  $v(x_j)-m_{j0} \prec_2 m_{0i} - v(x_i)$ and hence that the
  constraints on $t$ derived from $x_i\in\pclocks{\Sigma}\cap C$ and
  $x_j\in\hclocks{\Sigma}\cap C$ are coherent.
 
  Then, the set of solutions of the inequalities is not empty. Let $t$
  be such a solution. We let $v_M$ be the valuation s.t. $v_M(x) =
  v(x)+ t$ for any $x\in\pclocks{\Sigma}\cap C$ and $v_M(x)=v(x)-t$
  for all $x\in \hclocks{\Sigma}\cap C$. Such a valuation exists, and
  is in $\denote{M}$ by construction.  Then, since $v = v_M+t$ with
  $v_M\in \denote{M}$ and some $t\in\posreal$ we deduce that $v\in
  \future{\denote{M}}$. We conclude that
  $\denote{\future{M}}\subseteq\future{\denote{M}}$.
\item As prophecy and history clocks evolve in opposite directions, the
  arguments of the proof for $\future{M}$ can be adapted.
  \item   In the case where $M^1=M_\emptyset$ or $M^2=M_\emptyset$ the proof
  is trivial. Otherwise, $M^1$ and $M^2$ are non-empty, since they are
  in normal form.  First consider the case where there are $0\leq
  i,j\leq n$ s.t. $M^1_{i,j}\not\leq M^2_{i,j}$ and $M^2_{i,j}\not\leq
  M^1_{i,j}$. By definition of $\leq$, this implies that either
  $M^1_{i,j}$ or $M^2_{i,j}$ is equal to $(\bot,=)$, and that the
  other constraint is of the form $(\prec,m)$, with
  $m\in\posreal\cup\{\infty\}$. Then, clearly
  $\denote{M^1}\cap\denote{M^2}=\emptyset$ and thus
  $\denote{M^1}\cap\denote{M^2}=\denote{M_\emptyset}=\denote{M^1\cap
    M^2}$.

  Thus, let us assume that for all $0\leq i,j\leq n$,
  $\min\{M^1_{i,j},M^2_{i,j}\}$ is defined.  Let $v$ be a valuations
  on the set of clocks $C=\{x_1,\ldots,x_n\}$, let $M$ be an EDBM on
  $C$. Then for all $0\leq i,j\leq n$, we say that $v$ satisfies
  $M_{i,j}=(m_{i,j},\prec_{i,j})$ (denoted $v\models
  M_{i,j}=(m_{i,j},\prec_{i,j})$) iff:
  \begin{enumerate}
  \item either $m_{i,j}=?$
  \item or $i=0$ and $m_{i,j}=v(x_j)=\bot$
  \item or $j=0$ and $m_{i,j}=v(x_i)=\bot$
  \item or $m_{i,j}\not\in\{?,\bot\}$ and
    $|x_i|-|x_j|\prec_{i,j}m_{i,j}$, assuming
    $\bot+c=c+\bot=\bot-c=c-\bot=\bot$ for all $c$.
  \end{enumerate}
  Then, clearly, $\denote{M}=\{v\mid \forall 0\leq i,j\leq n: v\models
  M_{i,j}\}$.
  
  Then observe that, by definition of the ordering $\leq$ on EDBM
  constraints:
  \begin{eqnarray*}
    \big(v\models (m_1,\prec_1)\textrm{ and } v\models (m_2,\prec_2)\big)
    &\textbf{ iff }&
    v\models\min\big\{(m_1,\prec_1),(m_2,\prec_2)\big\}
  \end{eqnarray*}
  By definition of $M^1\cap M^2$, we conclude that
  $\denote{M^1}\cap\denote{M^2}=\denote{M^1\cap M^2}$.
  \item In the case where $M=M_\emptyset$ the proof is trivial. Otherwise,
  $M$ is non-empty, since it is in normal form. Let us assume that $x$
  is the clock of index $k$ in $C$. We first examine the case where
  $M_{k0}=(?,=)$, then $\release_x(M)=M$ since $M$ is in normal form.
  Since $x$ is already unconstrained in $M$, we have
  $\release_x(\denote{M})=\denote{M}$. Hence
  $\release_x(\denote{M})=\denote{M}=\denote{\release_x(M)}$.

  Otherwise, let us assume that $C=\{x_1,\ldots,x_n\}$ and that for
  all $0\leq i,j\leq n$: $M_{ij}=(m_{ij},\prec_{ij})$.  Let $v\in
  \release_x(\denote{M})$. Then there is some $v'\in \denote{M}$, such
  that $v'(y)=v(y)$, for all clock $y\neq x$ in C. Since $v'$
  satisfies all the constraints of $M$, $v$ satisfies all the
  constraints of $\denote{M}$ related to clocks different from $x$,
  and hence $v\in \denote{\release_x(M)}$. Thus,
  $\release_x(\denote{M})\subseteq \denote{\release_x(M)}$.

  Conversely, let $v\in\denote{\release_x(M)}$. We consider two cases.
  Either $M_{k0}=(\bot,=)$. We let $v'$ be the valuation
  s.t. $v'(x)=\bot$ and for all $y\neq x$: $v'(y)=v(y)$. Clearly
  $v'\in\denote{M}$ since $M$ is non-empty and in normal form. Hence,
  $v\in \release_x{\denote{M}}$. Otherwise $M_{k0}=(m,\prec)$ with
  $m\in\posreal\cup\{\infty\}$, since we have already ruled out the
  case $M_{k0}=(?,=)$. We let $v'$ be a valuation that is a solution
  of the following set of inequalities if $x$ is an history clock:
  \begin{displaymath}
    \begin{array}{rcll}
      v'(y) &=&v(y) & \textrm{ for all $y\neq x$}\\
      -m_{0k}\prec_{0k}v'(x)&\prec_{k0}& m_{k0} \\
      -m_{jk}\prec_{jk} v'(x) - v'(x_j)&\prec_{kj}& m_{kj}& \textrm{ for all $x_j\in(\hclocks{\Sigma}\cap C)\setminus\{x\}$}\\
      -m_{jk}\prec_{jk} v'(x) + v'(x_j)&\prec_{kj}& m_{kj}& \textrm{ for all $x_j\in(\pclocks{\Sigma}\cap C)\setminus\{x\}$}\\
    \end{array}
  \end{displaymath}
  or a solution of the following set of inequalities if $x$ is a
  prophecy clock:
  \begin{displaymath}
    \begin{array}{rcll}
      v'(y) &=&v(y) & \textrm{ for all $y\neq x$}\\
      -m_{k0}\prec_{k0}v'(x)&\prec_{0k}& m_{0k} \\
      -m_{kj}\prec_{kj} v'(x) - v'(x_j)&\prec_{jk}& m_{jk}& \textrm{ for all $x_j\in(\pclocks{\Sigma}\cap C)\setminus\{x\}$}\\
      -m_{jk}\prec_{jk} v'(x) + v'(x_j)&\prec_{kj}& m_{kj}& \textrm{ for all $x_j\in(\hclocks{\Sigma}\cap C)\setminus\{x\}$}\\
    \end{array}
  \end{displaymath}
  assuming as usual that $\bot+c=c+\bot=\bot-c=c-\bot=\bot$.

  Since $M$ is in normal form, such a $v'$ exists (otherwise, some of
  the constraints could be strengthened without modifying the zone,
  and $M$ is not in normal form), and it is in
  $\denote{M}$. Hence $v$ is in $\release_x(\denote{M})$. We conclude
  that $\denote{\release_x(M)}\subseteq\release_x(\denote{M})$.
\item The proof stems from the fact that
  $\denote{M^1}\subseteq\denote{M^2}$ {\bf iff}
  $\denote{M^1}\cap\denote{M^2}=\denote{M^1}$ {\bf iff}
  $\denote{M^1\cap M^2}=\denote{M^1}$ {\bf iff},
  $\min(M^1_{i,j},M^2_{i,j})=M_{i,j}$ for all $0\leq i,j\leq n$ (by
  Proposition~\ref{prop:construction-EDBMs}).
\end{enumerate}
\end{proof}

\paragraph{Forward and backward analysis}
We present now the forward and backward analysis algorithms adapted to
\ecta. From now on, we consider an \ecta
$A=\tuple{Q,q_i,\Sigma,\delta,\alpha}$.  We also let $\post{(q,v)} =
\{(q',v')\mid\exists t, a : (q,v)\xrightarrow{t,a} (q',v')\}$ and
$\pre{(q,v)}=\{(q',v')\mid\exists t, a : (q',v')\xrightarrow{t,a}
(q,v)\}$ and we extend those operators to sets of states in the
natural way.  Moreover, given a set of valuations $Z$ and a location
$q$, we abuse notations and denote by $(q,Z)$ the set $\{(q,v)\mid
v\in Z\}$. Also, we let
$\poststar{(q,Z)}=\bigcup_{n\in\mathbb{N}}\posti{(q,Z)}{n}$ and
$\prestar{(q,Z)}=\bigcup_{n\in\mathbb{N}}\prei{(q,Z)}{n}$, where
$\posti{(q,Z)}{0}=(q,Z)$ and
$\posti{(q,Z)}{n}=\post{\posti{(q,Z)}{n-1}}$, and similarly for
$\prei{(q,Z)}{n}$. The \postop and \preop operators are sufficient to
solve language emptiness for \ecta:

\begin{lemma}[adapted from \cite{297329}, Lemma
  1]\label{lem:zone-analysis}
  Let $A=\tuple{Q,q_i,\Sigma,\delta,\alpha}$ be an \ecta, let
  $I=\{(q_i,v)\mid v\textrm{ is initial}\}$, and
  let $\overline{\alpha}=\{(q,v)\mid q\in\alpha \textrm{ and
  }v\textrm{ is final}\}$. Then:
  
  \centerline{$\poststar{I}\cap \overline{\alpha}\neq\emptyset$ iff
    $\prestar{\overline{\alpha}}\cap I\neq\emptyset$ iff
    $L(A)\neq\emptyset$.}
\end{lemma}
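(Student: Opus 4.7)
The plan is to reduce the claim to pure definition-unfolding: the $\postop$ relation on extended states, by its very definition as one combined delay-plus-action transition $\xrightarrow{t,a}$, is precisely what is strung together in the definition of a run of $\TS{A}$, and reaching a state $(q,v)$ with $q\in\alpha$ and $v$ final from an initial state is exactly what ``accepting initialised run'' means. So I would establish $L(A)\neq\emptyset\iff\poststar{I}\cap\overline{\alpha}\neq\emptyset$ first, then obtain the $\preop$ equivalence by a symmetric argument.

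For the forward direction of $L(A)\neq\emptyset\Rightarrow\poststar{I}\cap\overline{\alpha}\neq\emptyset$, I would take any timed word $\theta=(\tau,w)\in L(A)$ and its accepting run $(q_0,v_0)(t_0,w_0)(q_1,v_1)\cdots(q_n,v_n)$. Since $v_0$ is initial we have $(q_0,v_0)\in I$; since $q_n\in\alpha$ and $v_n$ is final we have $(q_n,v_n)\in\overline{\alpha}$; and each $(q_i,v_i)\xrightarrow{t_i,w_i}(q_{i+1},v_{i+1})$ is by definition a witness that $(q_{i+1},v_{i+1})\in\post{(q_i,v_i)}$. A straightforward induction on $i$ then yields $(q_n,v_n)\in\posti{I}{n}\subseteq\poststar{I}$, so the intersection is non-empty. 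For the converse, pick any $(q_n,v_n)\in\poststar{I}\cap\overline{\alpha}$; by definition of $\poststar{}$ there exist $n$ and a sequence $(q_0,v_0),\dots,(q_n,v_n)$ with $(q_0,v_0)\in I$ and $(q_{i+1},v_{i+1})\in\post{(q_i,v_i)}$ for each $i$. Each step comes with witnesses $t_i,w_i$ and one sets $\tau_i=\sum_{j\leq i}t_j$ to recover the timed word $\theta=(\tau,w_0w_1\cdots w_{n-1})$, which is then accepted by the reconstructed run, since $v_0$ is initial and $v_n$ is final.

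The equivalence $\prestar{\overline{\alpha}}\cap I\neq\emptyset\iff L(A)\neq\emptyset$ follows by the analogous argument with the chain traversed in reverse: $(q_0,v_0)\in\prestar{\overline{\alpha}}$ iff there exists $n$ and a sequence $(q_0,v_0),\dots,(q_n,v_n)$ such that $(q_n,v_n)\in\overline{\alpha}$ and $(q_i,v_i)\in\pre{(q_{i+1},v_{i+1})}$ for each $i$; unpacking $\preop$ gives exactly a run starting at $(q_0,v_0)$ and ending at $(q_n,v_n)\in\overline{\alpha}$, and non-emptiness of the intersection with $I$ guarantees that this run is initialised and accepting.

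There is no real obstacle here; the statement is essentially a sanity check that the symbolic reachability operators $\poststar{}$ and $\prestar{}$ on the extended state-space of $\TS{A}$ faithfully capture the existence of an accepting initialised run. The only minor bookkeeping is the translation between the incremental delays $t_i$ used in $\xrightarrow{t,a}$ and the cumulative timestamps $\tau_i$ of a timed word, which the definition of \emph{run} in the preliminaries performs explicitly, so no extra argument is needed.
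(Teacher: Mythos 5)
Your proposal is correct. There is, however, no proof in the paper to compare it against: the lemma is stated as adapted from Lemma~1 of the original event-clock automata paper \cite{297329} and is used as a black box. Your definition-unfolding argument --- identifying the chains that witness membership in $\poststar{I}$ (resp.\ $\prestar{\overline{\alpha}}$) with initialised accepting runs of $\TS{A}$, including the bookkeeping between the incremental delays $t_i$ and the cumulative timestamps $\tau_i$ --- is precisely the routine justification that the citation stands in for, and it goes through without gaps.
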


Let us show how to compute these operators on \ezone s. Given a
location $q$, an \ezone{} $Z$ on $\clocks{\Sigma}$, and an edge
$e=(q,a,\psi,q')\in\delta$, we let:
{\small
\begin{eqnarray*}
  \poste{e}{(q_1,Z)}&=&%
  \begin{cases}
    \left(q',\Bigl(\release_{\hclock{a}}\bigl(\release_{\pclock{a}}(\future{Z}\cap(\pclock{a}=0))\cap \psi\bigr)\Bigr)\cap (\hclock{a}=0)\right)&\textrm{if }q_1= q\\
    \emptyset&\textrm{otherwise}\\
  \end{cases}\\
  \pree{e}{(q_1,Z)}&=&%
  \begin{cases}
    \left(q,\past{\bigl(\release_{\pclock{a}}(\release_{\hclock{a}}(Z\cap(\hclock{a}=0))\cap \psi)\bigr)\cap (\pclock{a}=0)}\right)&\textrm{if }q_1= q'\\
    \emptyset&\textrm{otherwise}\\
  \end{cases}
\end{eqnarray*}
} Then, it is easy to check that
$\post{(q,Z)}=\cup_{e\in\delta}\poste{e}{(q,Z)}$ and
$\pre{(q,Z)}=\cup_{e\in\delta}\pree{e}{(q,Z)}$. With the algorithms on
EDBMs presented above, these definitions can be used to compute the
\textsf{Pre} and \textsf{Post} of zones using their EDBM
encodings. Remark that \textsf{Pre} and \textsf{Post} return
\emph{sets of \ezone s} as these are not closed under union.

\begin{algorithm}[t]
\caption{The forward and backward algorithms \label{algo:forward-analysis}}
\forwex\Begin{
  Let \texttt{Visited} = $\emptyset$ ; Let \texttt{Wait} = $\{(q_i,Z_0)\}$ \;
  \While{$\texttt{Wait}\neq\emptyset$}{
    Get and remove $(q,Z)$ from \texttt{Wait} \;
    \lIf{$q\in\alpha$ and $Z\subseteq Z_f$}{
      \Return{\texttt{Yes}} \;
    }
    \If{there is no $(q,Z')\in\texttt{Visited}$ s.t. $Z\subseteq Z'$}{
      $\texttt{Visited}:=\texttt{Visited} \cup \{(q,Z)\}$ ;
      $\texttt{Wait} := \texttt{Wait}\cup \post{(q,Z)}$ \;
    }
  }
  \Return{\texttt{No}} \;
}
\bigskip

\backex\Begin{
  Let \texttt{Visited} = $\emptyset$ ; Let \texttt{Wait} = $\{(q,Z_f)\mid q\in\alpha\}$ \;
  \While{$\texttt{Wait}\neq\emptyset$}{
    Get and remove $(q,Z)$ from \texttt{Wait} \;
    \lIf{$q=q_i$ and $Z\subseteq Z_0$}{
      \Return{\texttt{Yes}} \;
    }
    \If{there is no $(q,Z')\in\texttt{Visited}$ s.t. $Z\subseteq Z'$ \label{ln:if-back}}{
      $\texttt{Visited}:=\texttt{Visited} \cup \{(q,Z)\}$ ;
      $\texttt{Wait} := \texttt{Wait}\cup \pre{(q,Z)}$ \;
    }
  }
  \Return{\texttt{No}} \;
}
\end{algorithm}

\begin{figure}
\centerline{\includegraphics{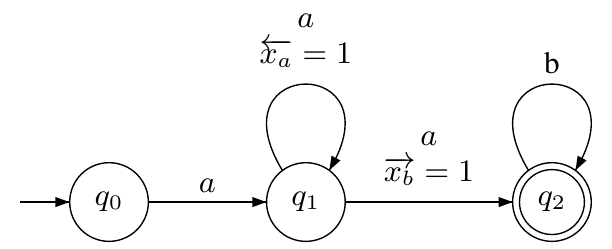}}
\caption{An \ecta for which backward analysis does not
  terminate.}\label{fig:backward-analysis}
\end{figure}

Let us now consider the \forwex and \backex algorithms to test for
language emptiness of \ecta, shown in
Algorithm~\ref{algo:forward-analysis}. In these two algorithms $Z_0$
denotes the zone $\bigwedge_{x\in\hclocks{\Sigma}}x=\bot$ containing
all the possible initial valuations and $Z_f$ denotes the zone
$\bigwedge_{x\in\pclocks{\Sigma}}x=\bot$ representing all the possible
final valuations. By Lemma~\ref{lem:zone-analysis}, it is clear that
\forwex and \backex are correct when they terminate. Unfortunately,
Fig.~\ref{fig:backward-analysis} shows an \ecta on which the
backward algorithm does not terminate. Since history and prophecy
clocks are symmetrical, this example can be adapted to define an \ecta
on which the forward algorithm does not terminate either. Remark that
in the case of timed automata, the forward analysis is not guaranteed
to terminate, whereas the backward analysis \emph{always terminates}
(the proof relies on a bisimulation argument)~\cite{DBLP:conf/cav/Alur99}.

\begin{proposition}\label{prop:forw-back-do-not-terminate}
  Neither \forwex\ nor \backex\ terminate in general.
\end{proposition}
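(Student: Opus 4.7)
The plan is to exhibit, for each of the two algorithms, an \ecta on which the set \texttt{Visited} grows without bound. For \backex, I would use the \ecta of Fig.~\ref{fig:backward-analysis} directly and analyse the backward iteration symbolically. For \forwex, I would take the mirror image of that \ecta (exchanging the roles of history and prophecy clocks and reversing the edges) and apply the same argument in dual form.

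For the backward case, I would start from the zone $Z_f$ at the accepting location and trace the repeated application of $\pree{e}{\cdot}$ around the loop. The crucial observation is that each traversal introduces a strictly positive lower bound on some prophecy clock $\pclock{a}$ via the combined effect of $\release_{\hclock{a}}$, intersection with the guard, $\release_{\pclock{a}}$, and finally $\past{\cdot}$: intuitively, $\past{\cdot}$ extends the admissible range of $\pclock{a}$ upward by any nonnegative delay, but the guard forces that delay to be at least a fixed positive constant because of the way the loop in Fig.~\ref{fig:backward-analysis} is designed. Consequently, after $n$ iterations one obtains a zone $Z_n$ in which $\pclock{a}$ is constrained to values $\geq n$, and the sequence $(Z_n)_{n\geq 0}$ is pairwise incomparable with respect to $\subseteq$. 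The inclusion test at line~\ref{ln:if-back} therefore never discards any of these zones, and \texttt{Visited} grows indefinitely.

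For \forwex, I would exploit the symmetry between prophecy and history clocks with respect to time elapsing: history clocks increase while prophecy clocks decrease, and $\future{\cdot}$ plays for history clocks the role that $\past{\cdot}$ played for prophecy clocks. Reversing the automaton of Fig.~\ref{fig:backward-analysis} and swapping the two kinds of clocks produces an \ecta whose forward iteration from $(q_i,Z_0)$ yields reachable zones $Z'_n$ in which a history clock $\hclock{a}$ is constrained to values $\geq n$. The same incomparability argument then shows that \texttt{Visited} grows unboundedly, so \forwex does not terminate either.

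The main obstacle is the symbolic book-keeping of $\pree{e}{\cdot}$ and $\poste{e}{\cdot}$ through a full loop iteration: one has to track precisely how the constraints on the relevant event clock evolve under the successive applications of $\release_x$, intersection with the guard, and $\past{\cdot}$ (resp.\ $\future{\cdot}$) using the concrete EDBM operations of Proposition~\ref{prop:construction-EDBMs}, and verify that each pass really does increase the lower (resp.\ upper) bound by a strictly positive amount. Once this monotonicity is established, the non-inclusion of successive zones, and hence the non-termination of both algorithms, follows immediately.
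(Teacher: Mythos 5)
Your overall strategy is the one the paper follows: run \backex\ symbolically on the \ecta\ of Fig.~\ref{fig:backward-analysis}, track the zones produced by iterating \preop\ around the loop, and dispatch \forwex\ by the history/prophecy symmetry. The problem is the step you yourself single out as crucial. From the fact that each backward traversal of the loop raises the lower bound of a prophecy clock by a positive constant, so that $Z_n$ contains a constraint of the form $\pclock{b}\geq n$, you conclude that the zones $(Z_n)_{n\geq 0}$ are \emph{pairwise incomparable}. That inference goes the wrong way. If the only effect of a loop traversal is to strengthen one-sided bounds, the zones form a strictly \emph{decreasing} chain $Z_1\supseteq Z_2\supseteq\cdots$, i.e.\ they are totally ordered by inclusion. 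Line~\ref{ln:if-back} tests exactly whether the new zone is included in an already visited one, so each $Z_{n+1}$ would be pruned as soon as $Z_n$ entered \texttt{Visited}; the set \texttt{Visited} would stay finite and \backex\ would \emph{terminate} --- the opposite of what you want. Growing lower bounds alone can never defeat the subsumption test, so the argument as proposed establishes nothing, and the same objection applies verbatim to your mirrored construction for \forwex.

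What the proof actually requires is a sequence of zones none of which is \emph{contained in} any earlier one, and this is precisely where the symbolic bookkeeping you defer cannot be waved away. Concretely, the successive zones escape subsumption only if each backward firing of the loop \emph{translates} a two-sided (equality or diagonal-equality) constraint rather than merely tightening one-sided bounds: for instance, if the loop guard pins down a prophecy clock, each traversal propagates a constraint of the form $\pclock{b}-\pclock{a}=n$ with $n$ increasing, and the resulting zones are pairwise disjoint, hence never discarded, which closes the argument. This subtlety is easy to miss: the zones displayed in the paper's own proof record only one-sided constraints with growing constants ($\pclock{b}\geq n$, $\pclock{b}-\pclock{a}\geq n$, $\pclock{b}+\hclock{a}\geq n+1$), and as written they too satisfy $Z^{n+1}\subseteq Z^n$, so the assertion there that the test of line~\ref{ln:if-back} ``is always fulfilled'' is the very point that must be proved, not asserted. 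Your proposal asserts it from a premise --- monotonically growing lower bounds --- that, taken by itself, implies its negation.
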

\begin{proof}
  We give the proof for \backex, a similar proof for \forwex\ can then
  be deduced by symmetry.  Consider the \ecta in
  Fig.~\ref{fig:backward-analysis}. Running the backward analysis
  algorithm from $(q_2, Z_f)$, we obtain, after selecting the
  transition $e=(q_2,b,\texttt{true}, q_2)$, the zone
  $Z_1=\pclock{a}=\bot\wedge \hclock{b}=0$. Then, the transition
  $e'=(q_1,a,\pclock{b}=1, q_2)$ is back-firable and we attain the
  zone $Z_2=\pclock{a}\geq 0\wedge \pclock{b}\geq 1\wedge
  \pclock{b}-\pclock{a}=1$. At this point the transition
  $e''=(q_1,a,\hclock{a}=1,q_1)$ is back-firable, which leads to the
  zone $Z_3=\pclock{b}\geq 1\wedge \pclock{a}\geq 0\wedge 0\leq
  \hclock{a}\leq 1 \wedge \pclock{b} - \pclock{a}\geq 1 \wedge
  \pclock{b}+\hclock{a}\geq 2$. The back-firing of the $e''$
  transition can be repeated, and, by induction, after $n$ iterations
  of the loop, the algorithm reaches the zone $Z^n=\pclock{b}\geq
  n\wedge \pclock{a}\geq 0\wedge 0\leq \hclock{a}\leq 1 \wedge
  \pclock{b} - \pclock{a}\geq n \wedge \pclock{b}+\hclock{a}\geq
  n+1$. Thus, the condition of the \textbf{if} in
  line~\ref{ln:if-back} is always fulfilled, and the algorithm visits
  an infinite number of zones, without reaching $q_0$.  
\end{proof}

\begin{figure}
  \begin{center}
    \begin{tabular}{lcclc}
      (a)&\includegraphics[scale=.5]{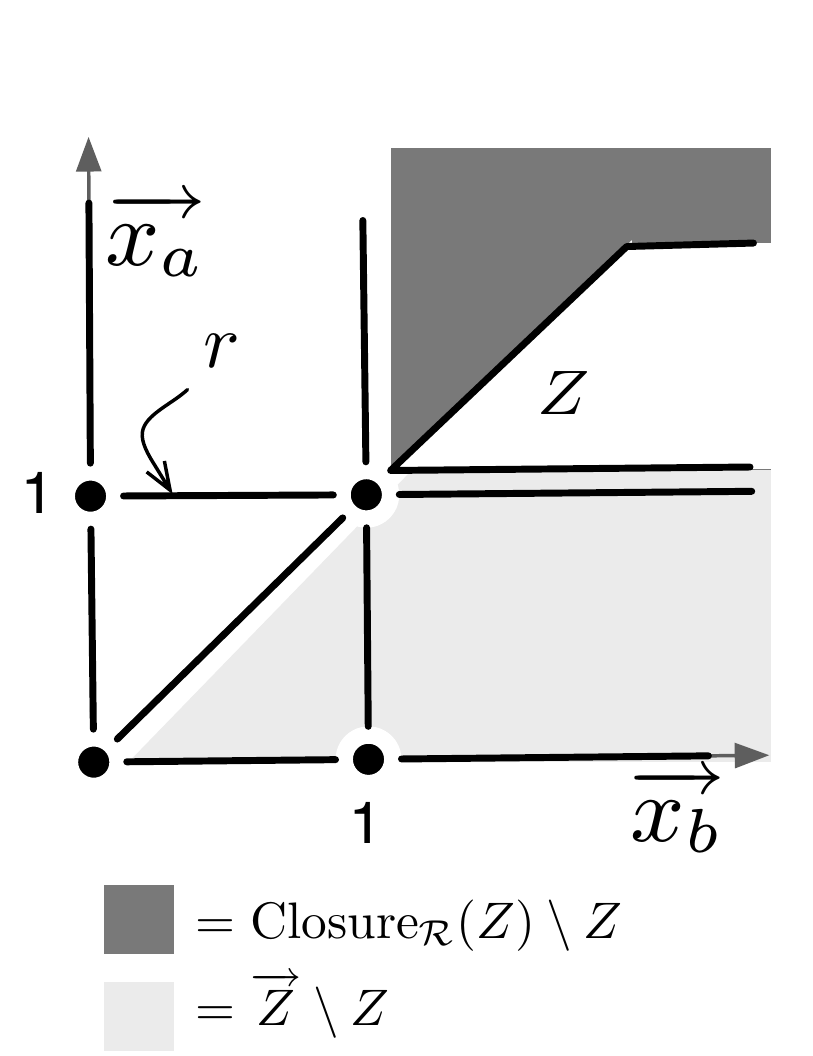}&\hspace*{.1cm}
      (b)&\includegraphics[scale=.5]{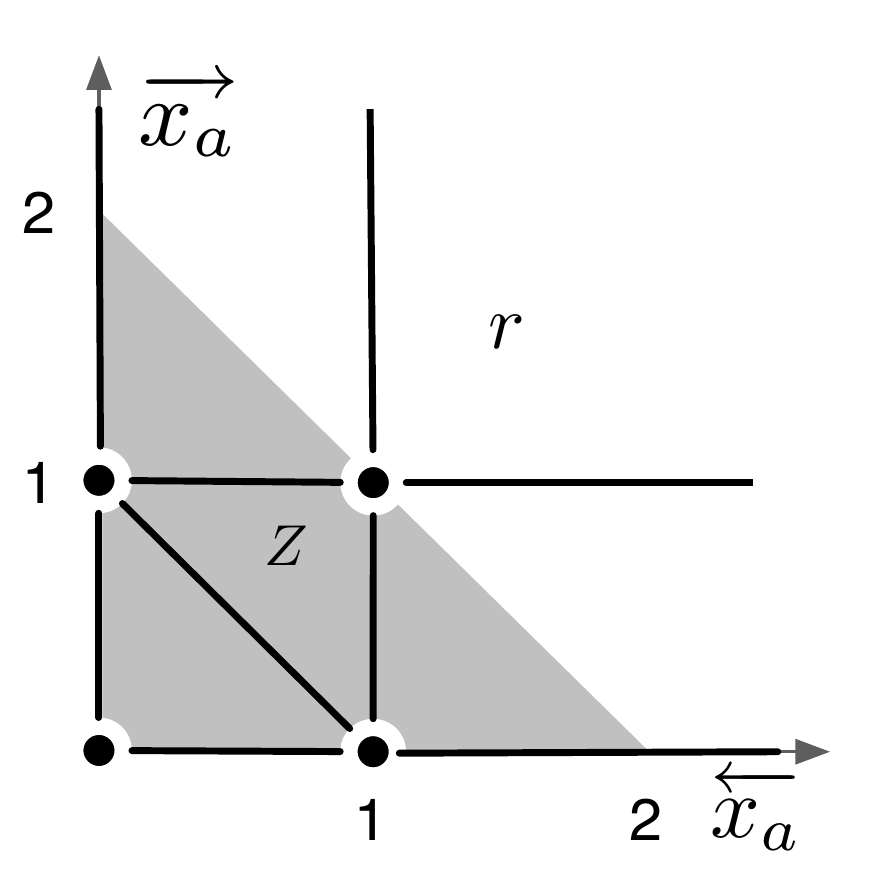}\\
    \end{tabular}
  \end{center}

\caption{ Examples for $\closure{R}$ and
 $\appro{k}$} \label{fig:closure}
\end{figure}

\section{Future work: widening operators}

As said earlier, the zone-based \emph{forward} analysis algorithm does
not terminate either in the case of timed automata. To recover
termination, \emph{widening operators} have been defined.
The most popular widening operator is the so-called $k$-approximation
on zones \cite{DBLP:conf/tacas/T98}. Roughly speaking, it is
defined as follows: in the definition of the zone, replace any
constraint of the form $x_i\prec c$ or $x_i-x_j\prec c$, by
respectively $x_i < \infty$ and $x_i-x_j < \infty$ if and only if $c>
k$, and replace any constraint of the form $c \prec x_i$ or $c \prec
x_i-x_j$, by respectively $ k < x_i$ and $k < x_i-x_j$, if and only if
$c > k$. Such an operator can be easily computed on DBMs, and is a
standard operation implemented in several tools such as as UppAal
\cite{DBLP:conf/qest/BehrmannDLHPYH06} for more more than 15
years. Nevertheless, this operator has been widely discussed in the
recent literature since Bouyer has pointed out several flaws in the
proposed proofs of soundness \cite{bouyer-fmsd-2004}. Actually, the
$k$-approximation is \emph{sound} when the timed automaton contains
\emph{no diagonal constraints}. Unfortunately, $k$-approximation is
\emph{not sound} when the timed automaton contains diagonal
constraints, and \emph{no sound widening operator exists} in this
case.

In \cite{bouyer-fmsd-2004}, Bouyer identifies some subclasses of timed
automata for which the widening operator is provably correct. The idea
of the proof relies mainly on the definition of another widening
operator, called the \emph{closure by regions}, which is shown to be
\emph{sound}. The closure by regions of a zone $Z$, with respect to a
set of regions $\cR$ is defined as the smallest set of regions from
$\cR$ that have a non-empty intersection with $Z$, i.e.
$\closure{\cR}(Z)=\{r\in\cR\mid Z\cap r\neq\emptyset\}$.  Then, the
proof concludes by showing that $\appro{k}(Z)$ is sound for some
values of $k$ (that are proved to exist) s.t.
\begin{equation}\label{eq:correctness-approx}
Z \subseteq \appro{k}(Z)\subseteq \closure{\cR}(Z).
\end{equation}

In the perspective of bringing \ecta from theory to implementation,
\emph{provably correct} widening operators are necessary, since
neither the forward nor the backward algorithm terminate in
general. We plan to adapt the $k$-approximation to \ecta, and we
believe that we can follow the general idea of the proof in
\cite{bouyer-fmsd-2004}.  However, the proof techniques will not be
applicable in a straightforward way, for several reasons. First, the
proof of \cite{bouyer-fmsd-2004} relies on the following property,
which holds in the case of timed automata: for all zone $Z$ and all
location $q$:
$\post{(q,\closure{\cR}(Z))}\subseteq\closure{\cR}(\post{(q,Z)})$.
Unfortunately this is not the case in general with \ecta. Indeed,
consider the zone $Z$ and the region $r$ in Fig.~\ref{fig:closure}
(a). Clearly, $r$ is included in $\future{\closure{\cR}(Z)}$ but $r$
is not included in $\closure{\cR}(\future{Z})$ (recall that prophecy
clocks decrease with time elapsing). Moreover, the definition of the
$k$ approximation will need to be adapted to the case of
\ecta. Indeed, the second inclusion in~(\ref{eq:correctness-approx})
does not hold when using the $k$-approximation defined for timed
automata, which merely replaces all constants $>k$ by $\infty$ in the
constraints of the zone. Indeed, consider the \ezone{} $Z$ defined by
$\hclock{a}+\pclock{a}\leq 2$ in Fig.~\ref{fig:closure}~(b), together
with the set of regions $\cR=\Reg{\clocks{\{a\}}}{1}$. Clearly, with
such a definition, the constraint $\hclock{a}+\pclock{a}\leq 2$ would
be replaced by $\hclock{a}+\pclock{a}<\infty$, which yields an
approximation that intersects with $r$, and is thus not contained in
$\closure{\cR}(Z)$. We keep open for future works the definition of a
provably correct adaptation of the $k$-approximation for \ecta.

\bibliography{Verif}

\end{document}

